\newlength\myindent
\newlength\mycolwid
\newtheorem{theorem}{Theorem}
\newtheorem{lemma}[theorem]{Lemma}
\newtheorem{definition}[theorem]{Definition}
\newtheorem{corollary}[theorem]{Corollary}
\title{\LARGE\bf Strong Attractors in Stochastic Adaptive Networks: Emergence and Characterization}
\author{Augusto Almeida Santos$^\star$, Soummya Kar$^\star$, Ramayya Krishnan$\dagger$, and Jos\'e M.~F.~Moura$^\star$

\thanks{This work was partially supported by NSF grant $\#$ CCF-$1513936$.}
\thanks{$^\star$ A.~A.~Santos, S.~Kar and J.~M.~F.~Moura are with the Dep.~of Electrical and Computer Engineering,
Carnegie Mellon University, Pittsburgh, PA 15213, USA, (augustos@andrew.cmu.edu, soummyak@andrew.cmu.edu, moura@ece.cmu.edu).}
\thanks{$\dagger$ R.~Krishnan is with Heinz College, Carnegie Mellon University, Pittsburgh, PA 15213, USA (rk2x@cmu.edu).}
}
\begin{document}

\maketitle
\thispagestyle{empty}
\pagestyle{plain}

\begin{abstract}
We propose a family of models to study the evolution of ties in a network of interacting agents by reinforcement and penalization of their connections according to certain local laws of interaction. The family of stochastic dynamical systems, on the edges of a graph, exhibits \emph{good} convergence properties, in particular, we prove a strong-stability result: a subset of binary matrices or graphs -- characterized by certain compatibility properties -- is a global almost sure attractor of the family of stochastic dynamical systems. To illustrate finer properties of the corresponding strong attractor, we present some simulation results that capture, e.g., the conspicuous phenomenon of emergence and downfall of leaders in social networks.
\end{abstract}

\section{Introduction}

We propose a family of models to study the evolution and long term formation of networks of interacting agents whose ties vary over time as a result of their interaction. The connection between two agents is assumed to be reinforced or penalized due to their interaction (or lack of it). Namely, if an agent~$i$ attempts cooperation with an agent~$j$ -- or simply,~$i$ calls~$j$ -- and~$j$ is cooperative -- or simply,~$j$ responds to~$i$, -- then the tie is reinforced, otherwise it is penalized. If the connection is not excited, i.e.,~$i$ does not call~$j$, then the tie fades away. In Section~\ref{sec:probform}, we detail our model.

To illustrate, we observe that videos on media networks such as Youtube or products at Amazon are networked by the recommended list display: whenever one browses a video or purchases a product, one is shown a list of recommended videos or products as depicted in Fig.\ref{fig:videosnet}. It is natural to expect that once a viewer selects one of the recommended videos, the link from the video to the original one should be somehow reinforced, otherwise, it is penalized. In other words, the video is more likely to show up in the recommended list in future views of that video. The same happens with products at Amazon and the like\footnote{While our focus is not on the exact inner working mechanism of Amazon, Youtube, or any specific social media network, we assume that, in broad terms, this should be the prevailing dynamical law for the evolution of ties of any unbiased (or not so biased) algorithm running underneath such platforms.}. The study of such systems shed light, e.g., on the often elusive and universally observed viral behavior. An empirical study of this viral behavior on Youtube can be found~\cite{viralyoutube}.
\begin{figure} [hbt]
\begin{center}
\includegraphics[scale= 0.6]{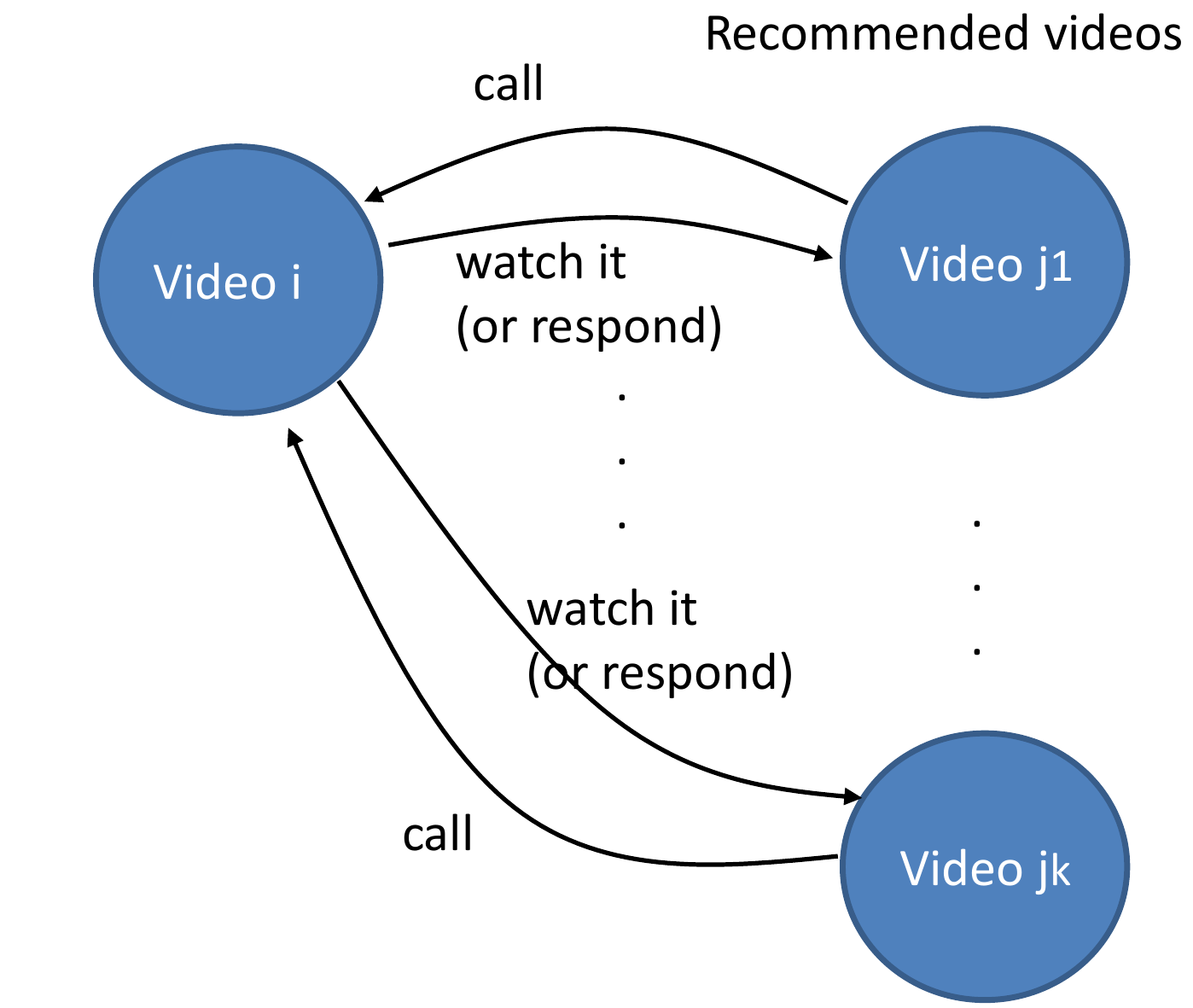}
\caption{Illustration of ties among videos on a videostreaming platform. Once a viewer browses a video~$i$, there are \emph{calls} from other videos in the recommended list. It is natural to expect that if the viewer watches at least one of the videos (i.e., responds to the call) then, the corresponding tie is reinforced.}\label{fig:videosnet}
\end{center}
\end{figure}

Other examples of networked systems whose ties change over time by reinforcement, penalization, and fading are:




$\bullet$ \textbf{Social Networks.} Individuals exert influence on other individuals for various reasons; one possible cause may be by reinforcement: if individual~$A$ calls individual~$B$, and~$B$ responds, then there is a reinforcement of the tie from~$A$ to~$B$. \emph{Call} and \emph{response} here may abstract broader considerations, e.g., individual~$A$ reads a book from author~$B$ (call) and~$A$ is satisfied with the experience (response) -- in this case, the reinforcement of the tie means that the reader is more likely to read future books of the author, or prone to be influenced by the opinions of that author. Moreover, a gossip effect may take place:~$A$ can now recommend~$B$ to other individuals that may also reinforce or undermine their connection to~$B$. Other aspects of reinforcement of ties in social networks, such as homophily, are considered in the reference~\cite{krishnan}. One important aspect to address in these systems is the observed emergence of few opinion leaders/makers \emph{versus} a crowd of non-influential followers. Reference~\cite{yingdalu} addresses the emergence of opinion leaders.



\textbf{Emergent Multi-Organizational Network (EMON).} References~\cite{ritahurricane},~\cite{anisya}~\cite{katrina} emphasize the fact that natural disasters challenge centralized management emergency systems, and a cooperative network of many different organizations is crucial to handle the critical demands during severe crisis. Such emergent multi-organizational network (EMON) evolves over time and it may lead to the emergence of leaders, i.e., organizations with high degree of connectivity, betweeness or closeness centrality (as specially leveraged in~\cite{katrina} for the Katrina Hurricane). These organizations play an important role on the overall distributed decision making of the network, and the dynamical laws for the evolution of such collaborative networks clearly emerge from a reinforcement-penalization collaborative principle.

In this paper, we develop a simple family of stochastic dynamical models on the edges of a graph combining reinforcement and penalization rules, and we show that this family exhibits strong convergence properties, in the limit as time goes to infinity, namely, the underlying evolving network of tie strengths converges almost surely to a subset of binary matrices (or unweighted directed graphs). In other words, a collaborative network emerges almost surely from the various local interactions among agents. To prove our results, we develop a stochastic LaSalle-like principle: i) show that certain binary matrices (or directed graphs) are \emph{local} strong-attractors to the dynamical system; and ii) show that the system tends to be attracted to a neighborhood of these graphs.

Time-varying networks model complex networked systems whose underlying topology evolves over time. Several models for time-varying networks in the literature are built mostly to capture robust self-organizing behavior -- universally observed in nature -- as in the adaptive coevolutionary networks~\cite{adaptivenet,adaptive2,adaptive3,adaptive4}, or to estimate the underlying network evolution~\cite{kolar,Xing}, or, in the case of infinite graphs, are built upon certain tractable regularity assumptions induced by invariant properties -- such as exchangeability and \emph{c\`adl\`ag} property -- as in~\cite{Crane,Crane2}. Our family of models may be framed in the context of adaptive networks, due to its qualitative property that the strength of ties may affect the state of nodes and vice-versa.  The field of adaptive networks, in particular, is mostly concerned with the study of emergence of robust self-organizing behavior in complex systems. We illustrate in Section~\ref{subsec:rise}, via numerical simulations, that the emergent networks of collaborative ties from our family of models exhibit an important robust self-organizing configuration: the emergence of new latent leaders and downfall of veteran ones. Indeed, emergence and downfall of leaders leads to real-time `division of labour' (to quote the term from~\cite{adaptivenet}) in EMON systems, -- in that high degree, betweeness, or centraility nodes may play major roles in overall distributed decision-making, -- new viral videos on Youtube, or popular books at Amazon. The referred phenomenon is an important factor that influences financial markets and the economy as well.

To conclude, in this paper, we develop a family of stochastic adaptive network models: i) that exhibits the property of almost sure convergence to a graph in the long run (to the best of our knowledge, this property is either absent or has not been analytically established in other models for adaptive networks); ii) that captures some of the important features of reinforcement and penalization of ties associated with the dynamics of social networks; iii) that exhibits a robust self-organizing behavior associated with the emergence and downfall of leaders; and iv) is broad enough to find applications in various scenarios (other than the ones mentioned in this section) such as Synapse plasticity -- from the Hebbian plasticity principle~\cite{hebb} --, and the emergence of network trail pheromone in ant colonies (e.g.,~\cite{debora1},~\cite{antevolution},~\cite{Dorigo}).

\textbf{Outline of the paper.} Section~\ref{sec:probform} introduces the family of models. Section~\ref{sec:DSA} presents definitions, notation, and discusses our approach to the problem, namely, it provides a sketch of the proof of the main results. Section~\ref{sec:convresults} analytically establishes that the dynamical system exhibits a strong attractor, a proper subset of the binary matrices. Section~\ref{sec:simula} provides simulation results that illustrate finer characterization of the resulting attractors and interesting long-term behavior properties of the system such as the emergence and downfall of latent leaders in social networks. Section~\ref{sec:concluding} concludes the paper.


\section{Problem Formulation}\label{sec:probform}

In this section, we propose a model for the evolution of ties among a set of interacting agents. The interactions are performed in two instances: \emph{calls} and \emph{responses}, and, as a result, the tie between two nodes is reinforced or penalized. More precisely, the dynamics evolve as follows: at each time step~$n$, if node~$i$ calls~$j$ and~$j$ does not respond, then the tie~$p_{ij}(n)\in\left[0,1\right]$ from~$i$ to~$j$ is penalized, i.e.,~$p_{ij}(n+1)<p_{ij}(n)$ (meaning that~$i$ is less likely to call~$j$ at time~$n+1$), otherwise, the tie is reinforced, i.e.,~$p_{ij}(n+1)>p_{ij}(n)$. In other words, if node~$i$ calls node~$j$ and there is (respectively, there is no) response, then~$i$ is more (respectively, less) likely to call~$j$ in the next iteration. We also assume fading due to idleness in the connection~$ij$, i.e., if~$i$ does not call~$j$, then~$p_{ij}(n+1)<p_{ij}(n)$. This Reinforce-Penalize-Fade (RPF) is the defining building block of our family of dynamical systems of interacting networked agents. Fig.~\ref{fig:callres} illustrates the dynamical model.
\begin{figure} [hbt]
\begin{center}
\includegraphics[scale= 0.6]{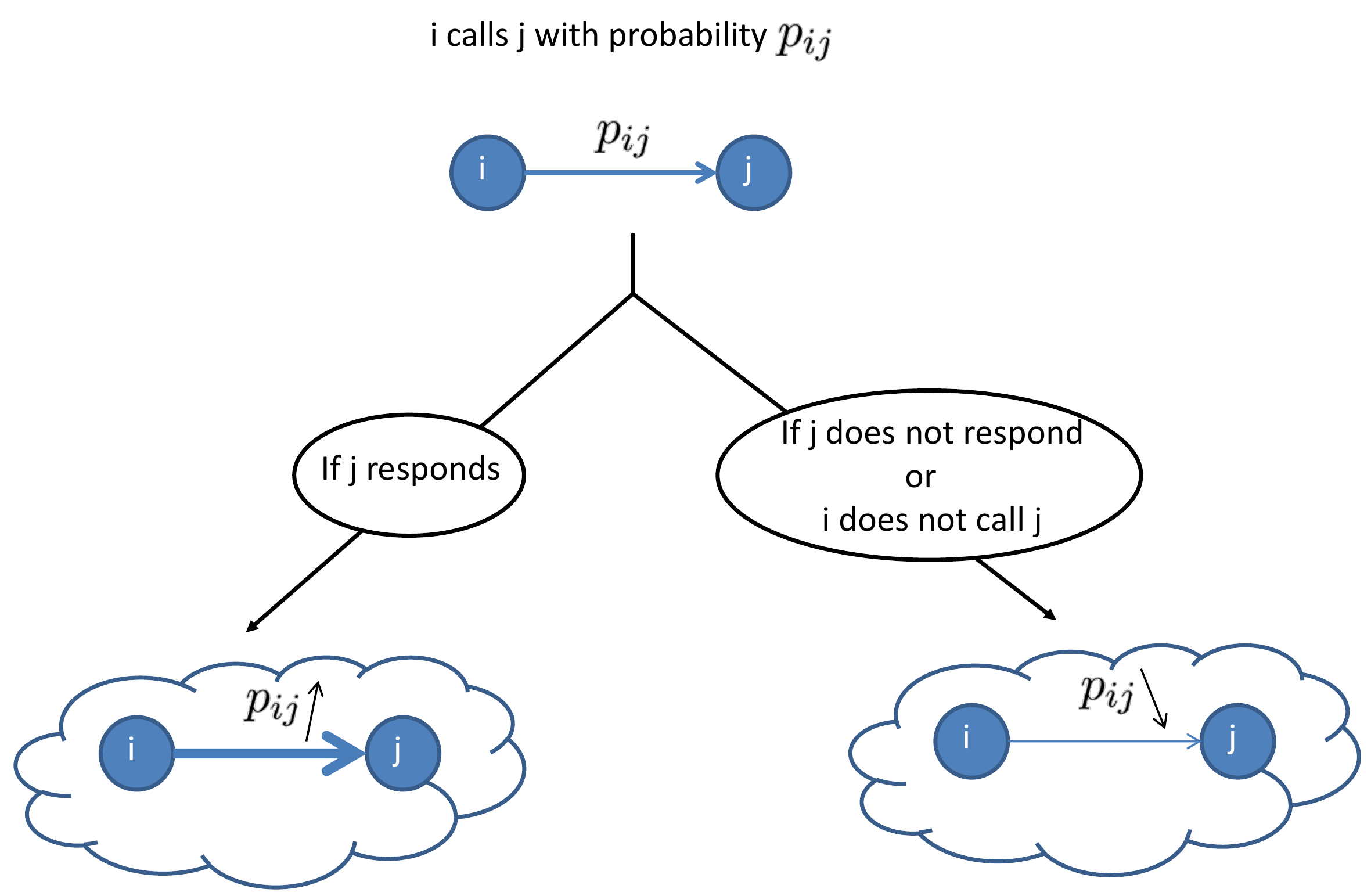}
\caption{Illustration of the reinforcement or fading of ties among agents: if node~$i$ calls~$j$ and~$j$ responds, then the tie captured by the weight~$p_{ij}$ increases, otherwise, the tie~$p_{ij}$ decreases.}\label{fig:callres}
\end{center}
\end{figure}

We model these dynamics formally as follows:
\begin{eqnarray}
p_{ij}(n+1) & = & C_{ij}(n)\left[R_{ji}(n)\underbrace{f_{i,+}(p_{ij}(n))}_{\mbox{Reinforce}}+(1-R_{ji}(n))\underbrace{f_{i,-}(p_{ij}(n))}_{\mbox{Penalize}}\right]\nonumber\\
& &+ \left(1-C_{ij}(n)\right)\underbrace{g_{i,-}\left(p_{ij}(n)\right)}_{\mbox{Fade}}\label{eq:stoc22}
\end{eqnarray}
where~$f_{i,+}$ and~$f_{i,-}$ represent the reinforcement and penalization rules at node~$i$, respectively;~$g_{i,-}$ is the fading law triggered when no call from~$i$ to~$j$ is performed; the state variable~$C_{ij}(n)$ is the indicator of a call from node~$i$ to node~$j$ at time~$n$, i.e., $C_{ij}(n)=1$ if~$i$ calls~$j$ at time~$n$, or zero, otherwise;~$R_{ji}(n)=1$, if~$j$ responds to~$i$ at time~$n$. Further comments and assumptions follow:

$\bullet$ \textbf{(Reinforcement)} ~$f_{i,+}\left(x\right)>x$ for all~$x\in\left[\left.0,1\right)\right.$ with~$f_{i,+}$ being an increasing\footnote{To be precise, the term increasing refers to strictly increasing in this paper.} function and~$f_{i,+}(1)=1$;

$\bullet$ \textbf{(Penalization or Fading)} ~$f_{i,-}\left(x\right)< x$ for all~$x\in\left(\left.0,1\right]\right.$ with~$f_{i,-}$ being an increasing function and~$f_{i,-}(0)=0$, and the same goes for~$g_{i,-}$;

$\bullet$  \textbf{(Calls)} ~$\left(\mathbf{C}(n)\right)=\left(C_{ij}(n)\right)_{ij}$ is the random process associated with calls over time and such that
\begin{equation}
\mathbb{P}\left(\left.C_{ij}(n)=1\right| \left(\mathbf{p}(k)\right)_{k\leq n}\right)=\mathbb{P}\left(\left.C_{ij}(n)=1\right| \mathbf{p}(n)\right)=p_{ij}(n),\nonumber
\end{equation}
i.e.,~$\mathbf{C}(n)$ is \emph{memoryless} and, in a sense, it is a Bernoulli random variable at each time step~$n$ conditioned on~$\mathbf{p}(n)$, and ~$p_{ij}(n)$ is in, \emph{latus sensus},  the probability that a call from~$i$ to~$j$ will be performed at time~$n$. As it will be clear in Section~\ref{sec:convresults}, we will also consider (fading) perturbations on the conditional distribution of~$\mathbf{C}(n)$;

$\bullet$ \textbf{(Responses)} We assume that nodes have limited capacity and tend to utilize all of their bandwidth for response. Specifically: i) (limited capacity) each node~$i$ can only respond to at most~$Q_{i}$ calls at each time~$n$; ii) if the number of incoming calls to~$i$ is below its capacity~$Q_{i}$, then~$i$ responds to all callers. More compactly,
\begin{equation}
\sum_{\ell} R_{i\ell}(n)=\min\left\{\sum_{\ell} C_{\ell i}(n),\, Q_{i}\right\}.\label{eq:respoconstraint}
\end{equation}
We also assume memorylessness in the response policy. Let~$R_{i}(n)$ be the~$i$-th row of~$R(n)$ and~$C_{\cdot\,i}(n)$ be the~$i$-th column of~$C(n)$. Then,
\begin{eqnarray}
\mathbb{P}\left(R_{i}(n)= v\left| \left(\mathbf{p}(k)\right)_{k\leq n}\right.\right) & = & \mathbb{P}\left(R_i(n)= v\left| \mathbf{p}(n)\right.\right) \label{eq:lasteq0}\\
& = & \sum_{u\geq v} \underbrace{\mathbb{P}\left(R_i(n)= v\left|\mathbf{p}(n), C_{\cdot\,i}(n)=u\right.\right)}_{=: M_{i}\left(v,\mathbf{p}(n),u\right)}\mathbb{P}\left(C_{\cdot\,i}(n)=u\left|\mathbf{p}(n)\right.\right)\label{eq:lasteq}
\end{eqnarray}
for any binary vector~$v\in\left\{0,1\right\}^{n}$, and the sum in~\eqref{eq:lasteq} is over all binary vectors~$u$\footnote{The inequality~$u\geq v$ is to be interpreted componentwise.} such that the support of~$v$ is contained in the support of~$u$. This way of decomposing the conditional probability expression will enable us to integrate the constraint in~\eqref{eq:respoconstraint} readily.
In other words, the conditional (on~$\mathbf{p}(n)$) distribution of~$\left(R_i(n)\right)$ is uniquely characterized by the choice of a particular function~$M_i$. To sum up, as explained below, the only requirements on the response~$\left(R_i(n)\right)$ selection process for each~$i$ are the capacity constraint in~\eqref{eq:respoconstraint} and the memorylessness property. The conditional distribution of~$\left(R_i(n)\right)$ is indexed by the class~$\mathcal{M}_{i}$ of functions
\begin{equation}
M_i\,:\,\left\{0,1\right\}^{N} \times \left[0,1\right]^{N \times N} \times \left\{0,1\right\}^{N}\rightarrow \left[0,1\right]
\end{equation}
with the following restrictions:

i) $M_i\left(v,\mathbf{p},u\right)\geq 0$ and~$\sum_{v} M_i\left(v,\mathbf{p},u\right)=1$;

ii) if~$v=u$ and~$\sum_{\ell} u_{\ell} \leq Q_i$ then,
\begin{equation}
M_i\left(v,\mathbf{p},u\right)=1;
\end{equation}

iii) if~$\sum_{\ell} v_{\ell} > Q_{i}$, then
\begin{equation}
M_i\left(v,\mathbf{p},u\right)=0.\label{eq:Mequations}
\end{equation}


\emph{Note on the response policy:} A specific response policy for a node~$i$ is fully characterized by one specific choice function in the class of functions~$\mathcal{M}_i$. The convergence results presented in this paper hold for any such choice. In particular, if a node receives more calls than its capacity, the specific mechanism or (possibly random) algorithm to select the callers to respond to are not relevant for the validity of the theorems presented.

Equation~\eqref{eq:stoc22} seems to lead to a decoupled dynamical system~$\left(\mathbf{p}(n)\right)$. In fact, the system is coupled due to the limited capacity of response~$Q_{i}$ of each node~$i$. For instance, whether a node~$i$ reinforces -- by responding to a call, -- the connection~$p_{ji}$ from a neighbor~$j$, depends on the other callers to~$i$, i.e., on~$\left\{p_{ki}\right\}_k$ as only~$Q_{i}$ connections can be reinforced. One may, at first, still argue that this conforms to only a \emph{local} coupling, i.e., the only coupling is among the edges~$\left\{p_{ki}\right\}_k$ pointing to~$i$, that is, one could study the dynamics~\eqref{eq:stoc22} by looking independently to the evolution of each column of~$\left(\mathbf{p}(n)\right)$. Even though this holds when each node~$i$ selects unbiasedly (uniformly randomly)~$Q_{i}$ callers to respond to, this is not true in general. To see this, whether~$p_{ij}$ is reinforced or not, affects the out-flow degree~$\sum_{\ell} p_{i\ell}$ of node~$i$. If, in turn, the response law is out-flow degree biased, then this will affect the evolution of the other out-flow probabilities~$p_{ik}$. But,~$p_{ik}$ is also coupled with the other weights~$p_{\ell k}$ pointing to~$k$ by the local argument just mentioned. In other words,~$p_{ij}$ is dependent on~$p_{\ell k}$ and~$\left(\mathbf{p}(n)\right)$ is, in general, a coupled dynamical system whose qualitative behavior may not be studied by partitioning the set of state variables and tracking each partite separately. In summary, under unlimited capacity, the system is uncoupled; limited capacity under an unbiased response leads to a column-decoupled system; but in general, the system is coupled.

From the memoryless assumptions on the state variables~$\left(\mathbf{C}(n)\right)$ and~$\left(\mathbf{R}(n)\right)$, and the dynamics~\eqref{eq:stoc22}, we note that~$\mathbf{p}(n)$ is a Markov process. This section constructs a family of stochastic dynamical systems on the Markov process~$\left(\mathbf{p}(n)\right)$, broadly determined by: i) the Reinforce-Penalize-Fade (RPF) rules~$f_{i,+}$,~$f_{i,-}$ and~$g_{i,-}$; and ii) the class of functions~$\mathcal{M}$ just described on the response~$\mathbf{R}(n)$. Our goals for the rest of the paper are to establish two convergence results, in Section~\ref{sec:convresults}, associated with the long-term behavior of the family of dynamical systems~\eqref{eq:stoc22}; and to explore, via numerical simulations in Section~\ref{sec:simula}, the finer aspects of the attractors of the system that capture many of the features of the evolution of social networks -- such as emergence of latent leaders and downfall of veteran ones.

In what follows, we assume that all processes and random variables are defined on a rich enough probability space~$\left(\Omega,\mathcal{F},\mathbb{P}\right)$. Whenever we refer to the set~$\Omega$, we take it to be as the probability space~$\left(\Omega,\mathcal{F},\mathbb{P}\right)$. For instance, \emph{almost all}~$\omega$ means in fact~$\mathbb{P}$-almost all~$\omega\in \Omega$. We will also refer to~$\left(\mathcal{F}_k\right)_{k\leq n}$ as the natural filtration of the process~$\left(\mathbf{p}(n)\right)$ up to time~$n$.



\section{Dynamical Systems Approach}\label{sec:DSA}

The stochastic dynamical system~\eqref{eq:stoc22} is captured compactly by the following stochastic recursive equation
\begin{equation}
\mathbf{p}(n+1)= \mathbf{f}\left(\mathbf{p}(n)\right).\label{eq:gener}
\end{equation}
We may refer to the map~$\mathbf{f}$ as a random (discrete-time) dynamical system since it defines a (discrete-time) stochastic process~$(x_n)_n$ as
\begin{eqnarray}
x_1 & = & \mathbf{f}(x_0,\omega)\nonumber\\
x_{n} & = & \mathbf{f}(x_{n-1},\omega),\nonumber
\end{eqnarray}
for each~$\omega \in \Omega$, or, in other words,~$x_n=\mathbf{f}^{n}\left(x_{0},\omega\right)$, where~$\mathbf{f}^{n}:=\underbrace{\mathbf{f}\circ \mathbf{f} \circ \cdots \circ \mathbf{f}}_{n\mbox{ times}}$ is the $n$-fold iterate of~$\mathbf{f}$ with respect to the same realization~$\omega$. In other words, under this representation, for each realization~$\omega\in \Omega$, the $n$-fold iterate~$\mathbf{f}^{n}(x,\omega)$ gives the state of the system at time~$n$ with initial condition~$x$. The evolution of such a system is shown in Fig.\ref{fig:bilayer}.
\begin{figure} [hbt]
\begin{center}
\includegraphics[scale= 0.7]{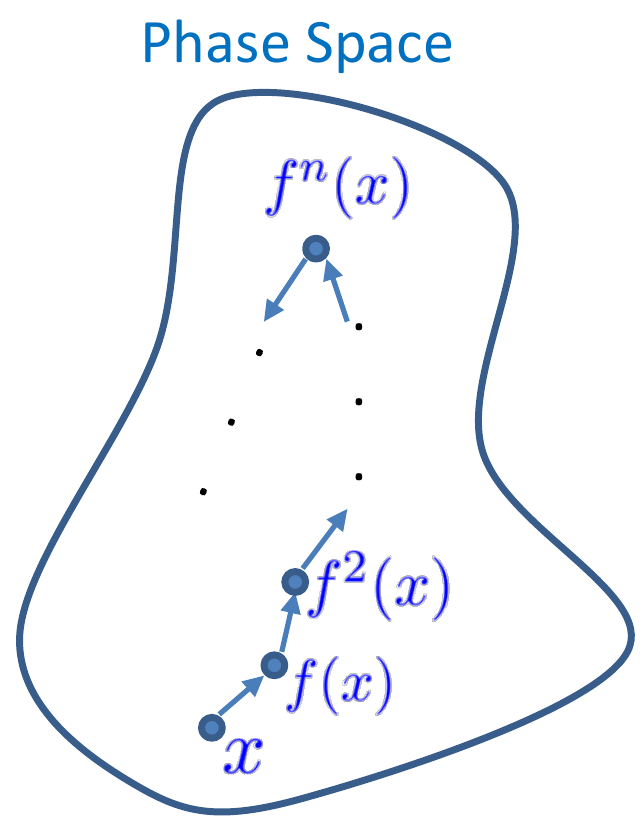}
\caption{Evolution of a realization of the stochastic dynamical system given by the iterates of~$\mathbf{f}$.}\label{fig:bilayer}
\end{center}
\end{figure}
We study the evolution of ties under this dynamical system. We are concerned with the long term behavior of the dynamical system~$\left(\mathbf{p}(n)\right)$~\eqref{eq:stoc22}. This leads to the notion of strong-fixed point and strong-attractor that we introduce next.
\begin{definition}[Strong-fixed point]
Let~$\mathbf{F}\,:\,\mathcal{D}\times \Omega \rightarrow \mathcal{D}$ be a discrete-time random dynamical system. We call~$x^*\in\mathcal{D}$ a strong-fixed point of~$\mathbf{F}$, whenever
\begin{equation}
x^{*}=\mathbf{F}(x^{*},\omega),\nonumber
\end{equation}
for almost all~$\omega\in \Omega$.
\end{definition}
Let
\begin{equation}
\mathcal{E}_{\mathbf{F}}=\left\{x\in \mathcal{D}\,:\, x=\mathbf{F}(x,\omega),\,\mbox{for almost all }\omega\right\}\nonumber
\end{equation}
be the set of strong-fixed points of the stochastic dynamical system~$\mathbf{F}$. From now on, we drop the subindex~$\mathbf{F}$ to write~$\mathcal{E}$ instead of~$\mathcal{E}_{\mathbf{F}}$, as the underlying random vector field is assumed to be the one in~\eqref{eq:stoc22}.


A goal in this paper is to prove that, under certain conditions on the update rules~$\mathbf{f}_{+}$,~$\mathbf{f}_{-}$,~$\mathbf{g}_{-}$, and the conditional distributions of~$\mathbf{R}$ and~$\mathbf{C}$, a finite proper subset of the set of binary matrices
\begin{equation}
\mathcal{E}=\left\{E_1,E_2,\ldots,E_M\right\}\subset \left\{0,1\right\}^{N \times N}\nonumber
\end{equation}
is a strong-attractor to the dynamical system~\eqref{eq:stoc22}, i.e.,
\begin{equation}
d\left(\mathbf{p}(n),\mathcal{E}\right)\overset{a.s.}\longrightarrow 0,\nonumber
\end{equation}
where~$d$ is the Euclidean metric. Refer to Theorem~\ref{th:final} (or Corollary~\ref{th:dual}) or its more general version (with sparse in time perturbations) in Theorem~\ref{th:finalwexplor}. In other words, given a model in the family~\eqref{eq:stoc22}, a \textbf{deterministic network} or graph emerges in the long-run almost surely. Technically speaking, as will be illustrated later, standard convergence arguments for Markov processes are not directly applicable to the class of dynamical systems given by~\eqref{eq:stoc22} and we develop new techniques that might be of independent interest. For instance, we observe that standard absorbing-like arguments for Markov processes (e.g., irreducibility plus ergodicity) do not apply to our setup. In fact, if the process~$\left(\mathbf{p}(n)\right)$ departs from the interior~$\left(0,1\right)^{N\times N}$, and the reinforcement~$\mathbf{f}_{+}$ and penalization laws~$\mathbf{g}_{-}$,~$\mathbf{f}_{-}$ are \emph{soft}, i.e.,~$\mathbf{0}_{N}<\mathbf{g}_{-}(\mathbf{x}),\,\mathbf{f}_{-}(\mathbf{x}),\,\mathbf{f}_{+}(\mathbf{x})<\mathbf{1}_{N}$ for all~$\mathbf{x}\in \left(0,1\right)^N$ (e.g.,~\eqref{eq:soft}), then~$\left(\mathbf{p}(n)\right)$ never coalesces with extreme points~$\mathcal{E}\subset \left\{0,1\right\}^{N \times N}$ in finite time; rather, it accumulates onto it. We prove such convergence results by showing the following: \textbf{i)} any~$E_{\ell}\in \mathcal{E}$ is a local attractor with positive probability (bounded away from zero in a small neighborhood of~$E_{\ell}$) \textbf{(}\textbf{Theorem}~\ref{co:bounde}\textbf{)}; \textbf{ii)} any arbitrarily small cover of~$\mathcal{E}$ is recurrent \textbf{(}\textbf{Theorem}~\ref{th:ball}\textbf{)}. These two assertions will imply the strong-convergence via Theorem~\ref{th:markovhomog}.

\emph{Remark:} As discussed in the third-to-last paragraph in Section~\ref{sec:probform}, if the capacity of response of all nodes is unlimited, then the dynamical system~$\left(\mathbf{p}(n)\right)$ is decoupled. In this case, the convergence referred to above follows as a corollary to the Doob's (sub-)martingale convergence theorem~\cite{williamsprobability,Diffusion}, as each coordinate~$\left(p_{ij}(n)\right)$ evolves independently as a sub-martingale or a super-martingale -- depending on whether the reinforcement rule~$f_{i,+}$ is stronger than the fading~$g_{i,-}$ or vice-versa. When limited capacity is assumed, the coordinates are coupled and the process loses the martingale structure. Doob's convergence does not apply to the resulting coupled stochastic dynamical system~$\left(\mathbf{p}(n)\right)$.

\section{0-1 Convergence Law}\label{sec:convresults}

%
%

In this section, we establish the~$0-1$ law in Theorems~\ref{th:final}-\ref{th:dual}-\ref{th:finalwexplor} for the stochastic dynamical system~$\left(\mathbf{p}(n)\right)$ described in~\eqref{eq:stoc22}. First, we define
\begin{eqnarray}
\mathcal{C}_{+} & \overset{\Delta}= & \left\{f\in\left[0,1\right]^{\left[0,1\right]}\,:\,f>{\sf id}\,\mbox{ over }\left.\left[0,1\right.\right),\,f(1)=1,\,f\,\,\mbox{is increasing}\right\}\nonumber\\
\mathcal{C}_{-} & \overset{\Delta}= & \left\{f\in\left[0,1\right]^{\left[0,1\right]}\,:\,f<{\sf id}\,\mbox{ over }\left.\left(0,1\right.\right],\,f(0)=0,\,f\,\,\mbox{is increasing}\right\}\nonumber\\
\mathcal{P}_{+} & \overset{\Delta}= & \left\{f\in\left[0,1\right]^{\left[0,1\right]}\,:\,\sum_{n}\left(1-f^{n}(x)\right)<\infty\right\}\nonumber\\
\mathcal{P}_{-} & \overset{\Delta}= & \left\{f\in\left[0,1\right]^{\left[0,1\right]}\,:\,\sum_{n}f^{n}(x)<\infty\right\},\nonumber
\end{eqnarray}
where~${\sf id}\,:\,\left[0,1\right]\rightarrow \left[0,1\right]$ is the identity map~${\sf id}(x)=x$. For instance, for
\begin{equation}
f_{+}(x)=\frac{1+x}{2},\,\,g_{-}(x)=\frac{x}{3},\label{eq:soft}
\end{equation}
we have~$f_{+}\,:\,\left[0,1\right]\to\left[0,1\right]\in \mathcal{C}_{+}\cap \mathcal{P}_{+}$ and~$g_{-} \,:\,\left[0,1\right]\to\left[0,1\right]\in \mathcal{C}_{-}\cap \mathcal{P}_{-}$.

Next, we summarize all the assumptions on the dynamical system~$\left(\mathbf{p}(n)\right)$ to prove the main Theorems~\ref{th:final}-\ref{th:dual}-\ref{th:finalwexplor}.

$\bullet$ \textbf{(RPF laws)} $f_{i,+}\in\mathcal{C}_{+}\cap \mathcal{P}_{+}$,~$g_{i,-}\in\mathcal{C}_{-}\cap \mathcal{P}_{-}$ and~$f_{i,-}\in\mathcal{C}_{-}$ for all nodes~$i$\footnote{Note that we do not assume that the $n$-fold iterates of the penalization rules~$f_{i,-}$ are Cesaro summable, i.e.,~$f_{i,-}\in\mathcal{P}_{-}$.};

$\bullet$ \textbf{(Responses)} We assume that nodes have limited capacity and tend to utilize all of their bandwidth for response. Specifically: i) each node~$i$ can only respond to at most~$Q_{i}$ calls at each time~$n$; ii) if the number of incoming calls to~$i$ is below its capacity~$Q_{i}$, then~$i$ responds to all callers. Mathematically,
\begin{equation}
R(n)\mathbf{1}_{N}=\min\left\{\left(C^{\top}(n)\mathbf{1}_{N}\right)^{\top},\, \mathbf{Q}\right\},\label{eq:responsecapa}
\end{equation}
where~$\mathbf{Q}=\left(Q_1,\ldots,Q_N\right)$ is the vector of capacities. We also assume that~$\left(R(n)\right)$ is memoryless. The~$\min$ in~\eqref{eq:responsecapa} is taken entry-wise. Equivalently, in terms of the conditional distribution of~$\left(R(n)\right)$, and as discussed earlier (see~\eqref{eq:lasteq0}-\eqref{eq:Mequations}), a particular response policy~$\left(R(n)\right)$ is characterized by a specific choice in the class~$\mathcal{M}_1\times \ldots \times \mathcal{M}_N$. As noted before, the criterion (biased or not) to select which set of~$Q_i$ nodes to respond to, in case~$i$ receives more calls than the capacity~$Q_i$ is not relevant for our convergence theorems, and thus the analytical results presented hold for a broad class of models obeying the stochastic dynamical system~\eqref{eq:stoc22};

$\bullet$ \textbf{(Calls)} in Subsection~\ref{subsec:nopert}, we assume~$\mathbb{P}\left(C_{ij}(n)=1\left|\mathbf{p}(n)\right.\right)=p_{ij}(n)$, and in Subsection~\ref{subsec:pert} we consider (fading) perturbations on the conditional distribution of~$\mathbf{C}(n)$ as described later in~\eqref{eq:pertlaw}. In both cases, we exclude self-calls, that is,~$\mathbb{P}\left(C_{ii}(n)=1\left|\mathbf{p}(n)\right.\right)=0$.

%

Under the above assumptions, it can be verified that the subset of binary matrices
\begin{equation}
\mathcal{E}=\left\{E\in\left\{0,1\right\}^{N\times N}\,:\, {\sf tr}\left(E\right)=0,\,\mathbf{1}^{\top} E\leq \left[Q_1\,\cdots\,Q_{N}\right]\right\}=:\left\{E_{1},\ldots,E_{M}\right\}\label{eq:setimport}
\end{equation}
is the set of strong-fixed points of the stochastic dynamical system~\eqref{eq:stoc22}, where~$Q_{i}$ is the capacity of response to node~$i$. In this section, we show that~$\mathcal{E}$ is in fact a global strong-attractor to~\eqref{eq:stoc22}, i.e., the process~$\left(\mathbf{p}(n)\right)$ converges almost surely to a matrix in~$\mathcal{E}$ from arbitrary initial condition~$\mathbf{p}(0)$.

\subsection{Main Convergence Result}\label{subsec:nopert}

The following lemma and corollary are crucial to what follows.
\begin{lemma}\label{th:sequence}
Let~$p_n\in\left.\left[0,1\right.\right)$ for all~$n\in\mathbb{N}$. Then,
\begin{equation}
\sum_{n} (1-p_n)< \infty \Rightarrow \Pi_n p_n >0.\nonumber
\end{equation}
\end{lemma}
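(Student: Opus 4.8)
The plan is to reduce the infinite product to a finite comparison with the convergent series $\sum_n (1-p_n)$, using an elementary product-sum inequality rather than taking logarithms. I would write $x_n := 1 - p_n \in [0,1]$, so that the hypothesis reads $\sum_n x_n < \infty$ and the conclusion reads $\prod_n (1 - x_n) > 0$. The first observation to record is that $\sum_n x_n < \infty$ forces $x_n \to 0$, equivalently $p_n \to 1$; in particular the tail of the product consists of factors bounded away from $0$, and only finitely many factors can be small. The meaningful content therefore lies entirely in the tail, and I would treat each $p_n$ as strictly positive, the case $p_n = 0$ making the product trivially vanish.

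The key step is the Weierstrass product inequality: for any $x_1, \dots, x_m \in [0,1]$,
\begin{equation}
\prod_{n=1}^{m} (1 - x_n) \geq 1 - \sum_{n=1}^{m} x_n. \nonumber
\end{equation}
I would prove this by induction on $m$: the base case is immediate, and in the inductive step one multiplies the bound $\prod_{n=1}^{m}(1-x_n) \geq 1 - \sum_{n=1}^m x_n$ by the nonnegative factor $(1 - x_{m+1})$ and discards the nonnegative cross term $x_{m+1}\sum_{n=1}^m x_n$, the estimate being vacuously true whenever its right-hand side is negative.

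With this inequality in hand the conclusion follows quickly. Since $\sum_n x_n$ converges, its tails vanish, so I may fix $N$ large enough that $\sum_{n \geq N} x_n < 1$, and set $c := 1 - \sum_{n \geq N} x_n > 0$. Applying the inequality to the factors indexed $N, \dots, M$ gives, for every $M \geq N$,
\begin{equation}
\prod_{n=N}^{M} p_n = \prod_{n=N}^{M} (1 - x_n) \geq 1 - \sum_{n=N}^{M} x_n \geq c > 0. \nonumber
\end{equation}
The partial products $\prod_{n=N}^{M} p_n$ are nonincreasing in $M$ (each factor lies in $(0,1]$) and bounded below by $c$, hence they converge to a limit $\geq c > 0$. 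Multiplying by the positive finite initial product $\prod_{n<N} p_n$ yields $\prod_n p_n \geq c \prod_{n<N} p_n > 0$.

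\emph{Main obstacle.} There is no deep difficulty here; this is elementary analysis. The only points requiring care are establishing the product-sum inequality cleanly by induction, and isolating a tail on which $\sum_{n \geq N} x_n < 1$ so that the lower bound $c$ is strictly positive, the initial finitely many factors being harmless as long as they are positive. As an alternative route I would note that one can instead take logarithms and use $0 \leq -\log(1 - x) \leq 2x$ for $x \in [0, 1/2]$ to obtain $\sum_n (-\log p_n) < \infty$ by comparison, whence $\prod_n p_n = \exp\bigl(\sum_n \log p_n\bigr) > 0$; the Weierstrass approach is preferred only because it sidesteps convergence bookkeeping for the logarithmic series.
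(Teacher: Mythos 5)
Your proof is correct and takes essentially the same route as the paper's: both rest on the Weierstrass product inequality $\prod_n (1-x_n) \geq 1 - \sum_n x_n$, applied to a tail on which $\sum_{n\geq N} x_n < 1$, with the finitely many initial factors split off as a positive finite product. If anything, your write-up is more complete than the paper's—you prove the inequality by induction and explicitly note that the initial factors must be strictly positive—points the paper leaves implicit ("expand the product", "the result follows").
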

The sequence~$\left(\Pi_k^n p_k\right)_n$ necessarily converges as it is monotonic. Also, if~$p_n\leq q$ for all~$n$ (or at least for infinitely many~$n$) for some~$q<1$, then~$\Pi_n p_n=0$. Therefore, a necessary condition to have~$\Pi_n p_n>0$ is that~$p_n$ converges to~$1$. Lemma~\ref{th:sequence} provides a sufficient condition: $p_n$ shall converge to~$1$ fast enough.

\begin{proof}
The result is adapted from~\cite{williamsprobability}, but since it is left as an exercise in that reference, we provide our own proof. Define~$\widetilde{p}_n:=1-p_n$. To start with, assume that~$\sum_{n}\widetilde{p}_n< 1$. Expand the product~$\Pi_n (1-p_n)$ to observe that
\begin{equation}
\Pi_n (1-\widetilde{p}_n) \geq 1- \sum_n \widetilde{p}_n>0.\nonumber
\end{equation}
Now, assume that~$1<\sum_n \widetilde{p}_n=m<\infty$, then
\begin{equation}
\exists{\overline{K}}:\,\,\sum_{\ell=1}^k \widetilde{p}_{\ell}\in\left.\left(m-1,m\right.\right],\,\,\,\,\,\forall{k\geq \overline{K}}.\nonumber
\end{equation}
Now,
\begin{equation}
\Pi_n (1-\widetilde{p}_n)= \Pi_{n=1}^{\overline{K}} (1-\widetilde{p}_n) \times \Pi_{n\geq \overline{K}+1} (1-\widetilde{p}_n),\nonumber
\end{equation}
and the result follows.
\end{proof}

\begin{corollary}\label{th:corosequence}
Let~$f\,:\,\left[0,1\right]\rightarrow \left[0,1\right]$ be an increasing function. Assume that
\begin{equation}
\sum_{n} (1-f^n(\widetilde{p}))< \infty,\label{eq:cond}
\end{equation}
for some~$\widetilde{p}\in\left(0,1\right)$. Then, there exists~$1>\epsilon>0$, such that
\begin{equation}
p\in\left.\left[\widetilde{p},1\right.\right)\Rightarrow \Pi_n f^{n}(p)>\epsilon,\nonumber
\end{equation}
i.e.,~$\epsilon$ does not depend on the choice of~$p\in\left.\left[\widetilde{p},1\right.\right)$ (though it may depend on~$\widetilde{p}$).
\end{corollary}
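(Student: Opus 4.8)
The plan is to reduce everything to a single application of Lemma~\ref{th:sequence} at the base point~$\widetilde{p}$, and then to exploit monotonicity of the iterates to transfer the resulting bound uniformly to every~$p\in\left.\left[\widetilde{p},1\right.\right)$. First I would record two elementary consequences of the hypothesis that~$f$ is (strictly) increasing. On the one hand, each iterate~$f^n$ is increasing, by induction on the number of compositions; hence~$p\geq \widetilde{p}$ implies~$f^n(p)\geq f^n(\widetilde{p})$ for every~$n$. On the other hand,~$f$ cannot attain the value~$1$ on~$\left.\left[0,1\right.\right)$: if~$f(a)=1$ for some~$a<1$, then the bounds~$f\leq 1$ and monotonicity force~$f\equiv 1$ on~$\left[a,1\right]$, contradicting strict monotonicity. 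Thus~$f$ maps~$\left.\left[0,1\right.\right)$ into~$\left.\left[0,1\right.\right)$, so all the factors~$f^n(\widetilde{p})$ and~$f^n(p)$ lie in~$\left.\left[0,1\right.\right)$, as required to invoke Lemma~\ref{th:sequence}.

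Next I would apply Lemma~\ref{th:sequence} to the single sequence~$p_n:=f^n(\widetilde{p})$. The hypothesis~\eqref{eq:cond} is exactly~$\sum_n\left(1-f^n(\widetilde{p})\right)<\infty$, so the lemma yields~$c:=\Pi_n f^n(\widetilde{p})>0$, a constant depending only on~$f$ and~$\widetilde{p}$ and not on any particular~$p$. The uniformity then follows from a term-by-term comparison of the two products: for~$p\in\left.\left[\widetilde{p},1\right.\right)$ the inequalities~$0\leq f^n(\widetilde{p})\leq f^n(p)<1$ give~$\Pi_{n=1}^{N} f^n(\widetilde{p})\leq \Pi_{n=1}^{N} f^n(p)$ for every partial product, and letting~$N\to\infty$ (both partial products are monotone decreasing in~$N$ and bounded below by~$0$, hence convergent) yields~$\Pi_n f^n(p)\geq c$. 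Choosing~$\epsilon:=c/2\in\left(0,1\right)$ then gives~$\Pi_n f^n(p)\geq c>\epsilon$ for all such~$p$ simultaneously, with~$\epsilon$ independent of~$p$, which is the assertion.

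I do not anticipate a genuine obstacle: the whole point is that monotonicity of the iterates lets the lower bound obtained at the single starting value~$\widetilde{p}$ be pushed up to every larger starting value, so~$\widetilde{p}$ is the worst case. The only step requiring any care is justifying the term-by-term comparison of the two infinite products, which is immediate once all factors are known to lie in~$\left.\left[0,1\right.\right)$, and confirming that~$\epsilon$ can be taken strictly below~$1$, which holds automatically since~$c$ is a product of factors each at most~$1$.
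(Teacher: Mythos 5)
Your proof is correct and takes essentially the same route as the paper's: one application of Lemma~\ref{th:sequence} at the base point~$\widetilde{p}$, followed by the monotonicity of the iterates~$f^{n}$ to dominate~$\Pi_n f^{n}(p)$ by~$\Pi_n f^{n}(\widetilde{p})$ for every~$p\in\left.\left[\widetilde{p},1\right.\right)$. The additional details you supply -- that~$f$ maps~$\left.\left[0,1\right.\right)$ into~$\left.\left[0,1\right.\right)$ so the lemma applies, and the passage from partial products to the infinite product -- are points the paper leaves implicit but do not change the argument.
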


Corollary~\ref{th:corosequence} provides a uniform boundness property for the infinite product~$\Pi_n f^{n}(p)$ over some subinterval~$\left.\left[\widetilde{p},1\right.\right)$, which will be relevant latter.

\begin{proof}
Observe that~$\Pi_n f^{n}(\widetilde{p})>0$ or better~$\Pi_n f^{n}(\widetilde{p})>\epsilon>0$, for some~$\epsilon$, due to~\eqref{eq:cond} and Lemma~\ref{th:sequence}. If~$p\geq \widetilde{p}$, then
\begin{equation}
f^{n}(p)\geq f^{n}(\widetilde{p}),\,\,\,\,\forall{n},\nonumber
\end{equation}
as~$f$ is increasing, and thus,
\begin{equation}
\Pi_n f^{n}(p)\geq \Pi_n f^{n}(\widetilde{p})>\epsilon,\nonumber
\end{equation}
for some~$\epsilon>0$ and all~$p\geq \widetilde{p}$.
\end{proof}

In what follows, it is relevant to recall that, to each binary matrix $E_{\ell}\in \mathcal{E}$, there exists an underlying support graph~$G_{\ell}=\left(V,\mathcal{D}_{\ell}\right)$ whose set of edges is given by
\begin{equation}
\mathcal{D}_{\ell}=\left\{(i,j)\in V \times V\,:\, E_{\ell}(i,j)=1\right\}.\label{eq:onetoone}
\end{equation}
Now, define the family of events (indexed by~$\mathcal{D}_{\ell}$)
\begin{equation}
\begin{array}{ccc}
\mathcal{A}_{\mathcal{D}_{\ell}}(n) & = & \left\{\omega\in \Omega\,:\,C_{ij}(n)=1,\,R_{ji}(n)=1 \mbox{ for all } (i,j)\in \mathcal{D}_{\ell}\right\}\\
& & \cap \left\{\omega\in \Omega,\,:\,C_{ij}(n)=0,\,\mbox{for all } (i,j)\notin \mathcal{D}_{\ell}\right\}
\end{array}
\end{equation}
i.e., the realizations~$\omega\in \Omega$ where, at time~$n$, for~$(i,j)\in\mathcal{D}_{\ell}$,~$i$ calls~$j$ and~$j$ responds to~$i$, and~$i$ does not call~$j$ for all the remaining edges~$\left(i,j\right)\notin \mathcal{D}_{\ell}$. The set~$\mathcal{A}_{\mathcal{D}_{\ell}}(n)$ is of positive probability if and only if~$p_{ij}(n)>0$ for all~$(i,j)\in \mathcal{D}_{\ell}$;~$p_{ij}(n)<1$ for all~$(i,j)\notin \mathcal{D}_{\ell}$; and~$E_{\ell}\in \mathcal{E}$. Recall that~$\mathcal{D}_{\ell}$ and~$E_{\ell}$ are in one-to-one correspondence given by~\eqref{eq:onetoone}.

Define
\begin{equation}
\overline{B}_{\delta}\left(\widetilde{\mathbf{p}}\right)\overset{\Delta}=\left\{\mathbf{p}\in \left[0,1\right]^{N \times N}\,:\, \left|p_{ij}-\widetilde{p}_{ij}\right|\leq \delta,\,\mbox{for all}\,\,1\leq i,j\leq N\right\}
\end{equation}
as the closed $L_{\infty}$ ball centered at the probability matrix~$\widetilde{\mathbf{p}}$ and of radius~$\delta$ restricted to the set~$\left[0,1\right]^{N \times N}$.

The next theorem states that any~$E_{\ell}\in \mathcal{E}$ (recall~\eqref{eq:setimport}) is a \emph{local} attractor with strictly positive probability in a small enough neighborhood of~$E_{\ell}$, for the stochastic dynamical system~$\left(\mathbf{p}(n)\right)$.
%
\begin{theorem}\label{co:bounde}
There exists~$\epsilon>0$ such that
\begin{equation}
\mathbb{P}\left(\mathbf{p}(n)\overset{n\to \infty}\longrightarrow E_{\ell}\left| \mathbf{p}(0)\in \overline{B}_{\delta}\left(E_{\ell}\right)\right.\right)> \epsilon,\nonumber
\end{equation}
for some~$\delta>0$ small enough and for all~$E_{\ell}\in\mathcal{E}$.
\end{theorem}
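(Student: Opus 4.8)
The plan is to isolate a single ``good'' event on which the recursion becomes deterministic and drives $\mathbf{p}(n)$ to $E_\ell$, and then to bound the probability of that event below, uniformly over the ball. Concretely, I would work with $\mathcal{G}=\bigcap_{n\geq 0}\mathcal{A}_{\mathcal{D}_\ell}(n)$, where $\mathcal{A}_{\mathcal{D}_\ell}(n)$ is the event defined just before the statement: at time $n$ exactly the edges of $\mathcal{D}_\ell$ are called and responded to, while no non-edge is called. On $\mathcal{A}_{\mathcal{D}_\ell}(n)$ the recursion \eqref{eq:stoc22} collapses to $p_{ij}(n+1)=f_{i,+}(p_{ij}(n))$ for $(i,j)\in\mathcal{D}_\ell$ and $p_{ij}(n+1)=g_{i,-}(p_{ij}(n))$ for $(i,j)\notin\mathcal{D}_\ell$, so that the penalization rule $f_{i,-}$ never enters (which is why $f_{i,-}\in\mathcal{C}_-$ alone suffices). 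Crucially, since $E_\ell\in\mathcal{E}$ respects the capacity constraint $\mathbf{1}^\top E_\ell\leq[Q_1\,\cdots\,Q_N]$ of \eqref{eq:setimport}, whenever the set of calls equals $\mathcal{D}_\ell$ the number of incoming calls at each node $j$ is at most $Q_j$, so by restriction ii) on the class $\mathcal{M}_j$ node $j$ responds to all its callers with probability one. Hence the only randomness per step lies in the calls, and by conditional independence of the Bernoulli calls,
\begin{equation}
\mathbb{P}\left(\mathcal{A}_{\mathcal{D}_\ell}(n)\mid \mathcal{F}_n\right)=\prod_{(i,j)\in\mathcal{D}_\ell}p_{ij}(n)\,\prod_{(i,j)\notin\mathcal{D}_\ell}\left(1-p_{ij}(n)\right).\nonumber
\end{equation}

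Next I would establish convergence on $\mathcal{G}$. On $\bigcap_{k<n}\mathcal{A}_{\mathcal{D}_\ell}(k)$ the trajectory is deterministic, with $p_{ij}(n)=f_{i,+}^{\,n}(p_{ij}(0))$ on edges and $p_{ij}(n)=g_{i,-}^{\,n}(p_{ij}(0))$ on non-edges. If $\mathbf{p}(0)\in\overline{B}_\delta(E_\ell)$ then $p_{ij}(0)\geq 1-\delta$ on edges and $p_{ij}(0)\leq\delta$ on non-edges; since $f_{i,+}>{\sf id}$, $g_{i,-}<{\sf id}$, and both are increasing, the edge entries only increase and the non-edge entries only decrease, so the whole trajectory stays in $\overline{B}_\delta(E_\ell)$. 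Using $f_{i,+}\in\mathcal{P}_+$ gives $1-f_{i,+}^{\,n}(p_{ij}(0))\to 0$ (edges to $1$) and $g_{i,-}\in\mathcal{P}_-$ gives $g_{i,-}^{\,n}(p_{ij}(0))\to 0$ (non-edges to $0$), hence $\mathbf{p}(n)\to E_\ell$ on $\mathcal{G}$.

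It remains to bound $\mathbb{P}(\mathcal{G}\mid\mathbf{p}(0)\in\overline{B}_\delta(E_\ell))$ away from zero uniformly, and this is the main obstacle. Fixing $\mathbf{p}(0)=\mathbf{p}_0$ in the ball and telescoping the conditional probabilities along the deterministic trajectory,
\begin{equation}
\mathbb{P}\left(\mathcal{G}\mid\mathbf{p}(0)=\mathbf{p}_0\right)=\prod_{(i,j)\in\mathcal{D}_\ell}\prod_{n\geq 0}f_{i,+}^{\,n}(p_{ij}(0))\;\cdot\;\prod_{(i,j)\notin\mathcal{D}_\ell}\prod_{n\geq 0}\left(1-g_{i,-}^{\,n}(p_{ij}(0))\right).\nonumber
\end{equation}
For the edge factors I would apply Corollary \ref{th:corosequence} with $\widetilde p=1-\delta$ (legitimate once $\delta$ is small enough that $\sum_n(1-f_{i,+}^{\,n}(1-\delta))<\infty$, which holds as $f_{i,+}\in\mathcal{P}_+$): it yields a constant $\epsilon_{ij}>0$ depending only on $\delta$ with $\prod_n f_{i,+}^{\,n}(p)>\epsilon_{ij}$ for all $p\in[1-\delta,1)$, hence uniformly over the ball. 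For the non-edge factors I would use monotonicity, $g_{i,-}^{\,n}(p_{ij}(0))\leq g_{i,-}^{\,n}(\delta)$, together with $\sum_n g_{i,-}^{\,n}(\delta)<\infty$ (since $g_{i,-}\in\mathcal{P}_-$ and $\delta$ small) and Lemma \ref{th:sequence}, to obtain $\prod_n(1-g_{i,-}^{\,n}(p_{ij}(0)))\geq\prod_n(1-g_{i,-}^{\,n}(\delta))=:\epsilon_{ij}'>0$, again uniform over the ball. As there are finitely many edges and non-edges, multiplying these finitely many strictly positive constants gives $\mathbb{P}(\mathcal{G}\mid\mathbf{p}(0)=\mathbf{p}_0)\geq\epsilon$ with a single $\epsilon>0$ independent of $\mathbf{p}_0$; averaging over $\mathbf{p}_0$ and using $\{\mathbf{p}(n)\to E_\ell\}\supseteq\mathcal{G}$ from the previous step yields $\mathbb{P}(\mathbf{p}(n)\to E_\ell\mid\mathbf{p}(0)\in\overline{B}_\delta(E_\ell))\geq\epsilon>0$, as required. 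The delicate point — and the reason the preparatory Lemma \ref{th:sequence} and Corollary \ref{th:corosequence} are needed — is precisely that an infinite product of one-step probabilities, each strictly below one and approaching one only asymptotically, must stay bounded away from zero, and that this bound hold uniformly across all initial conditions in the ball.
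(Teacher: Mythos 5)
Your proposal is correct and follows essentially the same route as the paper's own proof: lower-bound the convergence probability by $\mathbb{P}\bigl(\bigcap_{n\geq 0}\mathcal{A}_{\mathcal{D}_\ell}(n)\mid\cdot\bigr)$, telescope the conditional probabilities into the infinite product $\Pi_{n}\Pi_{ij\in\mathcal{D}_\ell}f_{i,+}^{n}(p_{ij}(0))\,\Pi_{ij\notin\mathcal{D}_\ell}\bigl(1-g_{i,-}^{n}(p_{ij}(0))\bigr)$, and invoke Lemma~\ref{th:sequence} and Corollary~\ref{th:corosequence} for a lower bound uniform over the ball. If anything, you spell out two points the paper leaves implicit — that the capacity constraint in~\eqref{eq:setimport} forces deterministic responses on the good event, and that the trajectory on that event actually converges to $E_\ell$ — so no gap.
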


\begin{proof}
Note that,
\begin{eqnarray}
\mathbb{P}\left(\mathbf{p}(n)\to E_{\ell}\left| \mathbf{p}(0)\in \overline{B}_{\delta}\left(E_{\ell}\right)\right.\right) & \geq & \mathbb{P}\left( \mathcal{A}_{\mathcal{D}_{\ell}}(n) \forall_{n\geq 0}\left|\mathbf{p}(0)\in \overline{B}_{\delta}\left(E_{\ell}\right)\right.\right)\nonumber\\
& = & \lim_{n} \mathbb{P}\left(\mathcal{A}_{\mathcal{D}_{\ell}}(k) \forall_{k\leq n}\left| \mathbf{p}(0)\in \overline{B}_{\delta}\left(E_{\ell}\right)\right.\right)\nonumber\\
& = & \lim_{n} \mathbb{P}\left(\mathcal{A}_{\mathcal{D}_{\ell}}(n) \left| \mathbf{p}(0)\in \overline{B}_{\delta}\left(E_{\ell}\right),\,\mathcal{A}_{\mathcal{D}_{\ell}}(n-1),\ldots,\mathcal{A}_{\mathcal{D}_{\ell}}(0)\right.\right)\nonumber\\
& & \times \mathbb{P}\left(\mathcal{A}_{\mathcal{D}_{\ell}}(n-1)\left| \mathbf{p}(0)\in \overline{B}_{\delta}\left(E_{\ell}\right),\,\mathcal{A}_{\mathcal{D}_{\ell}}(n-2),\ldots,\mathcal{A}_{\mathcal{D}_{\ell}}(0)\right. \right)\nonumber\\
& & \times \ldots \times \mathbb{P}\left(\mathcal{A}_{\mathcal{D}_{\ell}}(0) \left| \mathbf{p}(0)\in \overline{B}_{\delta}\left(E_{\ell}\right)\right.\right)\nonumber\\
& = & \Pi_{n=0}^{\infty} \Pi_{ij \in \mathcal{D}_{\ell}} f_{i,+}^{n}\left(p_{ij}(0)\right)\Pi_{ij \notin \mathcal{D}_{\ell}} \left(1-g_{i,-}^{n}\left(p_{ij}(0)\right)\right),\label{eq:condiconv}
\end{eqnarray}
where the last equality holds from the characterization of~$\mathcal{A}_{\mathcal{D}_{\ell}}$ and the assumption that~$\mathbf{R}$ fulfills~\eqref{eq:responsecapa}. Since we are conditioning on the event~$\mathbf{p}(0)\in \overline{B}_{\delta}(E_{\ell})$, we have in~\eqref{eq:condiconv} that~$p_{ij}(0)\in \left.\left(1-\delta,1\right.\right]$ for~$(i,j) \in \mathcal{D}_{\ell}$ and~$p_{ij}(0)\in \left.\left(0,\delta\right.\right]$, otherwise. Therefore, from Corollary~\ref{th:corosequence} and the assumption that~$f_{i,+}\in\mathcal{C}_{+}\cap \mathcal{P}_{+}$,~$g_{i,-}\in\mathcal{C}_{-}\cap \mathcal{P}_{-}$, we have
\begin{eqnarray}
\mathbb{P}\left(\mathbf{p}(n)\to E_{\ell}\left| \mathbf{p}(0)\in \overline{B}_{\delta}\left(E_{\ell}\right)\right.\right) & \geq & \Pi_{ij \in \mathcal{D}_{\ell}} \Pi_{n=0}^{\infty} f_{i,+}^{n}\left(p_{ij}(0)\right) \times \Pi_{ij \notin \mathcal{D}_{\ell}} \Pi_{n=0}^{\infty} \left(1-g_{i,-}^{n}\left(p_{ij}(0)\right)\right)\nonumber\\
& > & \epsilon,\nonumber
\end{eqnarray}
for some~$\epsilon$ and for all~$\ell$.
\end{proof}
To establish the main result of this subsection, Theorem~\ref{th:final}, or its dual formulation, Corollary~\ref{th:dual}, it will be crucial to show that the hitting time~$T$ to~$B:=\bigcup_{\ell}\overline{B}_{\delta}(E_{\ell})$ is integrable -- hence, almost surely finite -- i.e.,~$B$ is recurrent. For this, the next theorem will be useful.

\begin{theorem}[Lemma in Section $10.11$ in~\cite{williamsprobability}]\label{th:hitting}
Let~$T$ be a hitting time with respect to a filtration~$\left(\mathcal{F}_n\right)$. If for some~$\epsilon>0$ and some~$N\in \mathbb{N}$ we have
\begin{equation}
\mathbb{P}\left(T\leq n+N \left.\right| \mathcal{F}_n\right)> \epsilon,\,\,\,\forall{n\in\mathbb{N}},\label{eq:assumption}
\end{equation}
then,
$E\left[T\right]<\infty$.
\end{theorem}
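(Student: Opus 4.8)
The plan is to establish a geometric tail bound on $T$ and then recover integrability by summing the tail. Concretely, I will show by induction on $k$ that
\begin{equation}
\mathbb{P}\left(T> kN\right)\leq \left(1-\epsilon\right)^{k},\qquad k\in\mathbb{N}.\nonumber
\end{equation}
The heuristic is that the hypothesis~\eqref{eq:assumption} guarantees that, from \emph{any} time and conditioned on the entire past, the process hits within the next $N$ steps with probability at least $\epsilon$; hence the probability of surviving $k$ consecutive blocks of length $N$ without hitting decays at least geometrically at rate $1-\epsilon$.

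First I would record the key measurability fact: since $T$ is a hitting time with respect to $\left(\mathcal{F}_n\right)$, it is a stopping time, so $\left\{T\leq kN\right\}\in\mathcal{F}_{kN}$ and therefore the complementary event $\left\{T> kN\right\}$ is $\mathcal{F}_{kN}$-measurable. For the inductive step I would condition on $\mathcal{F}_{kN}$ and use the tower property:
\begin{equation}
\mathbb{P}\left(T> (k+1)N\right)=\mathbb{E}\left[\mathbf{1}_{\left\{T> kN\right\}}\,\mathbb{P}\left(T> (k+1)N\left|\,\mathcal{F}_{kN}\right.\right)\right].\nonumber
\end{equation}
On the event $\left\{T> kN\right\}$ one has $\left\{T>(k+1)N\right\}\subseteq\left\{T>kN+N\right\}$, so applying the hypothesis~\eqref{eq:assumption} with $n=kN$ yields
\begin{equation}
\mathbb{P}\left(T> (k+1)N\left|\,\mathcal{F}_{kN}\right.\right)\leq 1-\mathbb{P}\left(T\leq kN+N\left|\,\mathcal{F}_{kN}\right.\right)< 1-\epsilon.\nonumber
\end{equation}
Substituting and invoking the induction hypothesis gives $\mathbb{P}\left(T>(k+1)N\right)\leq (1-\epsilon)\,\mathbb{P}\left(T>kN\right)\leq (1-\epsilon)^{k+1}$, with base case $\mathbb{P}\left(T>0\right)\leq 1$.

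Finally I would convert the tail bound into a bound on the mean via the standard identity $\mathbb{E}\left[T\right]=\sum_{m\geq 0}\mathbb{P}\left(T> m\right)$. Grouping the sum into blocks of length $N$ and using monotonicity of $m\mapsto\mathbb{P}(T>m)$ together with the geometric bound,
\begin{equation}
\mathbb{E}\left[T\right]=\sum_{m\geq 0}\mathbb{P}\left(T> m\right)\leq N\sum_{k\geq 0}\mathbb{P}\left(T> kN\right)\leq N\sum_{k\geq 0}(1-\epsilon)^{k}=\frac{N}{\epsilon}<\infty.\nonumber
\end{equation}
The only delicate point, and the step I would treat most carefully, is the conditioning argument: one must verify that $\left\{T>kN\right\}$ is genuinely $\mathcal{F}_{kN}$-measurable so that it can be pulled outside the conditional expectation, and that the hypothesis~\eqref{eq:assumption} is applied at the correct index $n=kN$ uniformly across all blocks. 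Everything else is a routine geometric-series computation.
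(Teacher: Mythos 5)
Your proof is correct and follows essentially the same route as the paper's: an induction establishing the geometric tail bound $\mathbb{P}\left(T>kN\right)\leq(1-\epsilon)^{k}$ by conditioning on $\mathcal{F}_{kN}$, pulling out the $\mathcal{F}_{kN}$-measurable indicator $\mathbf{1}_{\left\{T>kN\right\}}$, and applying the hypothesis at $n=kN$, followed by summing the tail. If anything, your final step is tidier than the paper's, which writes $E\left(T/N\right)=\sum_{k\geq 1}\mathbb{P}\left(T/N>k\right)=\frac{\epsilon}{1-\epsilon}$ (the identity is only exact for integer-valued variables and the geometric sum is misevaluated), whereas your block-grouping bound $\mathbb{E}\left[T\right]\leq N\sum_{k\geq 0}\mathbb{P}\left(T>kN\right)\leq N/\epsilon$ handles this cleanly.
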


\begin{proof}
This theorem is adapted from~\cite{williamsprobability}, but since it is left there as an exercise, we prove it here for completion. We will prove that, as a consequence of assumption~\eqref{eq:assumption},
\begin{equation}
\mathbb{P}\left(T> kN\right) \leq \left(1-\epsilon\right)^{k},\label{eq:ind}
\end{equation}
for all~$k$. Note that this will conclude the proof of this theorem as
\begin{equation}
E\left(\frac{T}{N}\right)=\sum_{k=1}^{\infty} \mathbb{P}\left(\frac{T}{N}>k\right)=\frac{\epsilon}{1-\epsilon}<\infty,\nonumber
\end{equation}
and thus,~$E(T)<\infty$. We prove~\eqref{eq:ind} by induction on~$k$. For~$k=0$, the assertion is clear. Now, assume it is true for an integer~$k$. Then,
\begin{eqnarray}
\mathbb{P}\left(T>(k+1)N\right) & = & \mathbb{P}\left(T>(k+1)N, T>kN\right) \nonumber\\
& = & E\left(E\left(\mathbf{1}_{T>(k+1)N}\mathbf{1}_{T>kN}\left|\right.\mathcal{F}_{kN}\right)\right) \nonumber\\
& = & E\left(\mathbf{1}_{T>kN}\mathbb{P}\left(T>kN+N\left|\right. \mathcal{F}_{kN}\right)\right) \nonumber\\
& \leq & (1-\epsilon)E\left(\mathbf{1}_{T>kN}\right)=(1-\epsilon)\mathbb{P}\left(T>kN\right) \nonumber\\
& \leq & (1-\epsilon)^{k}(1-\epsilon)=(1-\epsilon)^{k+1},\nonumber
\end{eqnarray}
where the first inequality follows from~\eqref{eq:assumption}. This concludes the proof.
\end{proof}

The next theorem, Theorem~\ref{th:ball}, states that any arbitrarily \emph{small} (with non-empty interior) cover $\bigcup_{\ell}\overline{B}_{\delta}\left(E_{\ell}\right)$ to~$\mathcal{E}$ is a recurrent set.
\begin{theorem}\label{th:ball}
We have that
\begin{equation}
\mathbf{p}(n)\in B:=\bigcup_{k=1}^M \overline{B}_{\delta}\left(E_{k}\right)\,\,\,\mbox{infinitely often},\nonumber
\end{equation}
for all~$\delta>0$.
\end{theorem}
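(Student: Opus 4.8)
The plan is to reduce the claim to a single uniform, bounded-time recurrence estimate and then bootstrap it to the infinitely-often statement by the Markov property of $\left(\mathbf{p}(n)\right)$. Writing $L$ for a window length and keeping $N$ for the number of nodes, I would first establish that there exist an integer $L$ and a constant $\epsilon>0$, both depending only on $\delta$ and the RPF laws, such that
\begin{equation}
\mathbb{P}\!\left(\mathbf{p}(m+k)\in B\ \text{for some}\ 1\le k\le L \,\middle|\, \mathcal{F}_m\right)\ge \epsilon \qquad \text{for all } m, \nonumber
\end{equation}
uniformly over the realization (equivalently, over the current state $\mathbf{p}(m)$). Granting this, the infinitely-often conclusion follows exactly as in the proof of Theorem~\ref{th:hitting}: iterating the estimate over disjoint windows of length $L$ via the tower property gives $\mathbb{P}\left(\mathbf{p}(n)\notin B \text{ for all } m< n\le m+KL\right)\le (1-\epsilon)^{K}\to 0$, so $\mathbb{P}\left(\mathbf{p}(n)\notin B\ \forall n>m\right)=0$ for every $m$, and a union bound over $m\in\mathbb{N}$ yields $\mathbf{p}(n)\in B$ infinitely often, almost surely.

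The heart of the argument is the uniform estimate, which I would prove by exhibiting, from an arbitrary state $\mathbf{p}:=\mathbf{p}(m)$, an explicit driving trajectory of length $L=1+K$ whose probability is bounded below independently of $\mathbf{p}$. Call an off-diagonal entry \emph{high} if $p_{ij}>\tfrac12$ and \emph{low} otherwise. The difficulty is that a high entry can be pushed down only by not being called (a fading event of probability $1-p_{ij}$, which degenerates as $p_{ij}\uparrow 1$ and is impossible when $p_{ij}=1$) or by being penalized, which requires its target column to be over capacity and is then resolved by the arbitrarily chosen response policy beyond our control. To circumvent this, the first step is a single \emph{sorting} step in which, in every column $j$, all high off-diagonal entries call $j$ and all other entries stay silent; this call pattern has probability at least $\left(\tfrac12\right)^{N^2}$, since each factor is $p_{ij}>\tfrac12$, $1-p_{ij}\ge\tfrac12$, or (on the diagonal, where self-calls are excluded) equal to $1$. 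By the capacity constraint~\eqref{eq:responsecapa}, column $j$ then reinforces exactly $\min\{m_j,Q_j\}\le Q_j$ of its callers ($m_j$ being the number of high entries in column $j$) and penalizes the remaining $m_j-Q_j$; crucially, \emph{whichever} set of survivors the policy selects, the reinforced entries stay above $\tfrac12$ while every penalized entry is mapped to $f_{j,-}(p_{ij})\le f_{j,-}(1)<1$, so that after this one step every demoted entry lies below $1$ by the uniform gap $\eta:=\min_j\bigl(1-f_{j,-}(1)\bigr)>0$.

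Let $\mathcal{D}$ denote the (random) set of reinforced entries produced by the sorting step; since $\mathcal{D}$ contains at most $Q_j$ entries in each column $j$ and no diagonal entry (the diagonal is never reinforced), its indicator is a matrix $E\in\mathcal{E}$ in the sense of~\eqref{eq:setimport}. The remaining $K$ steps form a \emph{drive-home} phase aimed at $E$: at each step the entries of $\mathcal{D}$ call and all other entries stay silent. Given this pattern, each column has at most $Q_j$ callers, hence is under capacity, so every $\mathcal{D}$-entry is reinforced and everything else fades. The probability of the required pattern at each drive-home step is at least $\eta^{N^2}$ (taking $\eta\le\tfrac12$ without loss): the $\mathcal{D}$-entries call with probability exceeding $\tfrac12$, the freshly penalized entries stay silent with probability at least $\eta$ because they lie below $1-\eta$, low and diagonal entries stay silent with probability at least $\tfrac12$, and these silence probabilities only improve as the faded entries decrease. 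After $K$ such steps the $\mathcal{D}$-entries have value at least $f_{i,+}^{\,K}(\tfrac12)$ and every other entry is at most $g_{j,-}^{\,K}(1)$ (each lay below $1$ at the start of this phase and is faded $K$ times); since $f_{i,+}^{\,K}(\tfrac12)\uparrow 1$ and $g_{j,-}^{\,K}(1)\downarrow 0$ for every node---using only $f_{i,+}\in\mathcal{C}_{+}$ and $g_{j,-}\in\mathcal{C}_{-}$, not summability---a single finite $K=K(\delta)$, uniform over the finitely many nodes, forces $\mathbf{p}(m+L)\in\overline{B}_{\delta}(E)\subset B$. Multiplying the sorting bound by the $K$ drive-home bounds gives $\epsilon=\left(\tfrac12\right)^{N^2}\eta^{N^2K}>0$, independent of $\mathbf{p}$, as desired.

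The main obstacle, and the only place the construction must be delicate, is precisely the interaction of near-unit entries with over-capacity columns under an adversarially chosen response policy: a rotating policy can keep more than $Q_j$ entries perpetually bounded away from $0$ if one naively keeps calling the excess across many steps, so the reduction \emph{cannot} be spread out in time. The resolution is to perform the over-capacity penalization in a \emph{single} shot---which by~\eqref{eq:responsecapa} demotes exactly the right number of entries and lands them below $1$ by the uniform margin $\eta$---and then immediately commit to fading the demoted entries while protecting the survivors, so that no later step ever re-saturates a column. This is what renders the $(1+K)$-step bound uniform in $\mathbf{p}$; the monotone convergences $f_{i,+}^{\,K}(\tfrac12)\uparrow 1$, $g_{j,-}^{\,K}(1)\downarrow 0$ and the window iteration are routine.
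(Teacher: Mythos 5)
Your construction is, at its core, the same as the paper's: both proofs hinge on (i) using the capacity constraint~\eqref{eq:responsecapa} to ensure the state lies within a uniform $L_\infty$-distance $\alpha<1$ of some $E\in\mathcal{E}$, (ii) prescribing, over a bounded window of steps, the event ``every edge of $E$ calls and is answered, all other edges stay silent'' (the paper's $\mathcal{A}_{\mathcal{D}_\ell}$), whose probability is bounded below uniformly in the current state, and (iii) concluding recurrence by the geometric hitting-time iteration, which is exactly Theorem~\ref{th:hitting} (you re-derive it inline rather than citing it). The one structural difference is your ``sorting step'': the paper gets this phase for free, observing that after any single step of the dynamics---whatever the calls and the response policy do---at most $Q_i$ entries of column~$i$ can have been reinforced, while every other entry has passed through $f_{\cdot,-}$ or $g_{\cdot,-}$ and hence is at most $\widetilde\alpha=\max_i\{f_{i,-}(1),g_{i,-}(1)\}<1$; thus $\mathbf{p}(n)\in\bigcup_\ell \overline{B}_{\alpha}(E_\ell)$ with $\alpha=\max\{1/2,\widetilde\alpha\}$ holds deterministically for every $n\ge 1$. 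So your factor $\left(\tfrac12\right)^{N^2}$ and the one-shot penalization discussion are harmless but unnecessary; otherwise the two arguments coincide.

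There is, however, one incorrect claim you should repair: the limits $f_{i,+}^{K}(1/2)\uparrow 1$ and $g_{i,-}^{K}(1)\downarrow 0$ do \emph{not} follow from membership in $\mathcal{C}_{+}$ and $\mathcal{C}_{-}$ alone, contrary to your parenthetical ``not summability''. Functions in $\mathcal{C}_{\pm}$ need not be continuous, and iterates of a discontinuous increasing map can stall at an interior point: take $f(x)=\frac{x}{2}+\frac{3}{8}$ on $\left[0,\frac{3}{4}\right)$ and $f(x)=\frac{x}{2}+\frac{1}{2}$ on $\left[\frac{3}{4},1\right]$; then $f\in\mathcal{C}_{+}$, yet $f^{K}(1/2)\uparrow \frac{3}{4}\neq 1$, so no finite $K(\delta)$ exists for small $\delta$. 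What rescues the step is precisely the summability hypothesis: $f_{i,+}\in\mathcal{P}_{+}$ gives $\sum_n\bigl(1-f_{i,+}^{n}(x)\bigr)<\infty$ and hence $f_{i,+}^{n}(x)\to 1$, and $g_{i,-}\in\mathcal{P}_{-}$ likewise forces $g_{i,-}^{n}(x)\to 0$; these are assumed in the theorem, and the paper's own proof needs them too, to make its window lengths $k_{i,+},k_{i,-}$ finite. With that one substitution your argument is complete.
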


\begin{proof}
Let~$\widetilde{\alpha}:= \max\left\{i=1,\ldots,N\,:\,f_{i,-}(1),g_{i,-}(1)\right\}$, and define the set
\begin{equation}
\mathcal{L}\overset{\Delta}=\left\{\mathbf{p}\in \left[0,1\right]^{N \times N}\,:\,\exists{E \in {\sf Per}}\,\,E\mathbf{p}_i\leq \left[\mathbf{1}_{Q_i}^{\top} \,\, \widetilde{\alpha}\mathbf{1}^{\top}_{N-Q_i}\right]^{\top} \right\},
\end{equation}
where~$\mathbf{p}_i$ is the $i$-th column of the probability matrix~$\mathbf{p}$, and~${\sf Per}$ is the set of~$N \times N$ permutation matrices. Due to the limited capacity of response~$Q_{i}$ of each node~$i$, we observe that~$\mathbf{p}(n)\in \mathcal{L}$ for all~$n\geq 1$. Let~$\alpha=\max\left\{1/2,\widetilde{\alpha}\right\}$ and observe that
\begin{equation}
\mathcal{L}=\bigcup_{\ell=1}^{M} \overline{B}_{\alpha}\left(E_{\ell}\right).\nonumber
\end{equation}
Consider its disjointification~$\left(K_{\ell}\right)_{\ell}$
\begin{equation}
K_m = \overline{B}_{\alpha}\left(E_{m}\right)\setminus \bigcup_{k<m} \overline{B}_{\alpha}\left(E_{k}\right).\nonumber
\end{equation}
Now, define
\begin{equation}
k_{i,-}:=\min\left\{n\in\mathbb{N}\,:\,g_{i,-}^{n}(1) \in \left.\left[0,\delta\right.\right)\right\}\nonumber
\end{equation}
as the number of iterates to go from~$1$ to~$\left.\left[0,\delta\right.\right)$ by n-fold iterates of~$g_{i,-}$.
Also, define
\begin{equation}
k_{i,+}:=\min\left\{n\in\mathbb{N}\,:\,f_{i,+}^{n}(0) \in \left.\left[1-\delta,\delta\right.\right)\right\}\nonumber
\end{equation}
to be the number of iterates to go from~$0$ to~$\left.\left[\delta-1,1\right.\right)$ by iterates of~$f_{i,+}$. Choose~
\begin{equation}
N\geq \max\left\{i=1,\ldots,N\,:\,k_{i,+},k_{i,-}\right\}.\nonumber
\end{equation}
Let
\begin{equation}
T:=\min\left\{n\in\mathbb{N}\,:\,\mathbf{p}(n)\in B\right\}\nonumber
\end{equation}
be the hitting time to the set~$B$ by the process~$\left(\mathbf{p}(n)\right)$. For~$n\geq 1$ we have
\begin{eqnarray}
\mathbb{P}\left(T\leq n+N\left.\right| \mathcal{F}_n\right) & = & \mathbb{P}\left(T\leq n+N,\,\mathbf{p}(n)\in \mathcal{L}\left.\right| \mathcal{F}_n\right)\nonumber\\
& = & \sum_{m} \mathbb{P}\left(T\leq n+N\left.\right| \mathcal{F}_n,\,\mathbf{p}(n)\in K_m\right)\mathbb{P}\left(\mathbf{p}(n)\in K_m\right).\nonumber
\end{eqnarray}
Now,
\begin{eqnarray}
 \mathbb{P}\left(T\leq n+N\left.\right| \mathcal{F}_n,\,\mathbf{p}(n)\in K_m\right) & \geq & \mathbb{P}\left( \mathcal{A}_{\mathcal{D}_{m}}(k) \,\,\forall_{n \leq k\leq N+n}\left|\right.\mathcal{F}_n,\,\mathbf{p}(n)\in K_m\right)\nonumber\\
& \geq & \Pi_{k\geq n}^{N+n} \Pi_{ij \in \mathcal{D}_{\ell}} f_{i,+}^{k}(1-\alpha)\Pi_{ij \notin \mathcal{D}_{\ell}} \left(1-g_{i,-}^{k}(\alpha)\right)\label{eq:condiconv2}\\
& := & \epsilon >0,
\end{eqnarray}
where the first inequality follows from the choice of~$N$, and the second inequality follows similarly to as done in~\eqref{eq:condiconv}. Due to the monotonicity of~$f_{i,+}$ and~$g_{i,-}$, all the terms in the above finite product are positive, and, thus,~$\epsilon>0$ and it does not depend on~$n$. The theorem now follows from Theorem~\ref{th:hitting}.
\end{proof}

The next theorem allows us to combine Theorem~\ref{co:bounde} and Theorem~\ref{th:ball} to establish the main Theorem~\ref{th:final}.
\begin{theorem}\label{th:markovhomog}
Let~$p(n)$ be a time-homogeneous Markov process on a set~$X$. Let~$E\subset X$. Then,
\begin{equation}
\left\{\begin{array}{ccc} \mathbb{P}\left(p(n)\in E,\,\mbox{infinitely often}\right) & = & 1\\
\mathbb{P}\left(p(n)\in E,\,\forall{n\geq 0}\left|\right. p(0)\in E\right) & > & \epsilon\end{array}\right.\Longrightarrow \mathbb{P}\left(p(n)\in E,\,\mbox{eventually}\right)=1.\nonumber
\end{equation}
In other words, if~$E$ is recurrent and invariant with positive probability, then the tail of~$\left(p(n)\right)$ lies in~$E$.
\end{theorem}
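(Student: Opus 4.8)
The plan is to rule out the complementary event and to exploit recurrence so as to manufacture infinitely many independent chances of getting absorbed in $E$. Writing the conclusion as $\mathbb{P}(p(n)\in E\text{ eventually})=1$, its negation is the event $\{p(n)\notin E\text{ i.o.}\}$. Hypothesis~(1) gives $\mathbb{P}(p(n)\in E\text{ i.o.})=1$, so it suffices to show that the process cannot be both inside $E$ infinitely often and outside $E$ infinitely often; call this intersection the \emph{bad event} $F$. On $F$ the process enters and exits $E$ infinitely many times, which is exactly the structure I will turn against itself.

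First I would index the successive returns to $E$. Set $R_0=\inf\{n\geq 0:p(n)\in E\}$ and, recursively, $D_k=\inf\{n>R_k:p(n)\notin E\}$ and $R_{k+1}=\inf\{n>D_k:p(n)\in E\}$; each $R_k$ is a stopping time for $(\mathcal{F}_n)$ and $p(R_k)\in E$ on $\{R_k<\infty\}$. Let $S_k=\{p(n)\in E\text{ for all }n\geq R_k\}$ be the event that the process never leaves $E$ after its $k$-th return, so that $S_k^c=\{D_k<\infty\}$ is the event of exiting $E$ sometime after $R_k$. These complements nest decreasingly, since exiting after a later return forces exiting after an earlier one: $S_k^c\subseteq\{R_k<\infty\}\subseteq S_{k-1}^c$. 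Moreover $F\subseteq\bigcap_k S_k^c$, because on $F$ the process always exits again after every return.

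The core is a geometric decay obtained from the strong Markov property. By time-homogeneity and the strong Markov property at $R_k$, on $\{R_k<\infty\}$ we have $\mathbb{P}(S_k\mid\mathcal{F}_{R_k})=h(p(R_k))$, where $h(x)=\mathbb{P}_x(p(n)\in E\ \forall n\geq 0)$; hypothesis~(2), read as the uniform lower bound $h\geq\epsilon$ on $E$, yields $\mathbb{P}(S_k^c\mid\mathcal{F}_{R_k})\leq 1-\epsilon$. Integrating over $\{R_k<\infty\}$ and using $S_k^c\subseteq\{R_k<\infty\}\subseteq S_{k-1}^c$ gives $\mathbb{P}(S_k^c)\leq(1-\epsilon)\,\mathbb{P}(R_k<\infty)\leq(1-\epsilon)\,\mathbb{P}(S_{k-1}^c)$, whence $\mathbb{P}(S_k^c)\leq(1-\epsilon)^{k+1}$ by induction. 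Letting $k\to\infty$ forces $\mathbb{P}(\bigcap_k S_k^c)=0$, so $\mathbb{P}(F)=0$; together with hypothesis~(1) this establishes $\mathbb{P}(p(n)\in E\text{ eventually})=1$.

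The main obstacle is the clean invocation of the strong Markov property at the random return times $R_k$, together with extracting a genuinely uniform bound from hypothesis~(2). The conditional statement $\mathbb{P}(\,\cdot\mid p(0)\in E)>\epsilon$ is, as literally written, an average over the entry distribution, whereas the induction needs the pointwise (or essential-infimum) bound $h\geq\epsilon$ on $E$ so that it may be applied afresh at each return regardless of where the process re-enters. In this paper that uniform bound is precisely the content of Theorem~\ref{co:bounde}, which furnishes a single $\epsilon$ valid throughout each $\overline{B}_{\delta}(E_{\ell})$; for the abstract statement one simply reads hypothesis~(2) as this uniform bound. The remaining measurability bookkeeping---that conditioning on $\mathcal{F}_{R_k}$ is legitimate and that the nesting $S_k^c\subseteq S_{k-1}^c$ holds on $\{R_k<\infty\}$---is routine once the return-time scaffold is in place.
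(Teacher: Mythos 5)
Your proof is correct and takes essentially the same route as the paper's: successive return times to $E$, the strong Markov property together with time-homogeneity to obtain the geometric bound $(1-\epsilon)^{k}$ on the probability of $k$ exits, and hence probability zero for infinitely many exits, which combined with recurrence gives the conclusion. Your explicit remark that hypothesis~(2) must be read as a bound uniform over entry points in $E$ (rather than an average over the initial distribution) is precisely the reading the paper's proof implicitly relies on at each return time, and in the paper's application that uniformity is supplied by Theorem~\ref{co:bounde}.
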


\begin{proof}
Define the sequence of stopping times~$\left(T_i\right)_{i=1}^{\infty}$, where~$T_1$ is the time of first return to the set~$E$; $T_N$ is the time of the~$N$-th return to the set~$E$. Note that~$T_1< T_2 < T_3 <\ldots$. Define~$U$ as the random variable associated with the total number of exits from the set~$E$ over all time~$\mathbb{N}$. Note that if~$\left(p(n)\right)$ exits~$E$, it will return in finite time as~$E$ is recurrent. In particular, we have the equality of the events
\begin{equation}
\left\{U\geq N\right\}=\left\{T_N <\infty\right\},\nonumber
\end{equation}
for all~$N\geq 1$. We show by induction on~$N$ that
\begin{equation}
\mathbb{P}\left(U\geq N \left|\right. p(0)\in E\right) \leq (1-\epsilon)^N,\nonumber
\end{equation}
for all~$N\geq 1$. Indeed, for~$N=1$
\begin{equation}
\mathbb{P}\left(U\geq 1 \left|\right. p(0)\in E\right) = 1-\mathbb{P}\left(p(n)\in E,\,\forall{n\geq 0}\left|\right. p(0)\in E\right) \leq 1-\epsilon.\nonumber
\end{equation}
Now,
\begin{eqnarray}
\mathbb{P}\left(U\geq N \left|\right. p(0)\in E\right) & = & \mathbb{P}\left(T_N< \infty \left|\right. p(0)\in E\right)\nonumber\\
& = & \mathbb{P}\left(\left|T_{N}-T_{N-1}\right|<\infty,\,T_{N-1}< \infty \left|\right. p(0)\in E\right)\nonumber\\
& = & \mathbb{P}\left(\left|T_{N}-T_{N-1}\right|<\infty \left|\right. p(0)\in E,\,T_{N-1}< \infty\right)\mathbb{P}\left(T_{N-1}< \infty \left|\right. p(0)\in E\right)\nonumber\\
& = & \mathbb{P}\left(\left|T_{N}-T_{N-1}\right|<\infty \left|\right.T_{N-1}< \infty\right)\mathbb{P}\left(T_{N-1}< \infty \left|\right. p(0)\in E\right)\nonumber\\
& = & \mathbb{P}\left(U\geq 1 \left|\right. p(0)\in E\right)\mathbb{P}\left(U\geq N-1\left|\right.p(0)\in E\right)\nonumber\\
& \leq & \left(1-\epsilon\right)\left(1-\epsilon\right)^{N-1}=\left(1-\epsilon\right)^N,
\end{eqnarray}
where the fifth equality comes from the strong Markov property, and the last inequality results from the induction hypothesis. Thus,
\begin{equation}
E\left(U\left|\right. p(0)\in E\right)= \sum_N \mathbb{P}\left(U\geq N \left|\right. p(0)\in E\right) < \infty,\label{eq:strongmar}
\end{equation}
and therefore,
\begin{equation}
\mathbb{P}\left(U=\infty\left|\right. p(0)\in E\right)=0.\nonumber
\end{equation}
Define~$T$ as the stopping time associated with the first time the Markov process~$\left(p(n)\right)$ is in~$E$ -- note that, in general,~$T$ is different from~$T_1$ as the process can start in~$E$, and in this case~$T=0$, but~$T_1>0$. Since~$E$ is recurrent, we have that~$T<\infty$ almost surely and
\begin{equation}
E\left(U\right)=E\left(U \left|\right. p(T)\in E\right)\mathbb{P}\left(p(T)\in E\right)= E\left(U \left|\right. p(T)\in E\right)<\infty,\nonumber
\end{equation}
where the last inequality follows from~\eqref{eq:strongmar}, the homogeneity of~$\left(p(n)\right)$, and the strong Markov property. Hence, we conclude that
\begin{equation}
\mathbb{P}\left(p(n)\in E,\,\mbox{eventually}\right)=1-\mathbb{P}\left(\left\{p(n)\in E,\,\mbox{eventually}\right\}^{c}\right)=1-\mathbb{P}\left(U=\infty\right)=1.\nonumber
\end{equation}
\end{proof}

We are now prepared to obtain Theorem~\ref{th:final}.

\begin{theorem}\label{th:final}
Let~$\left(\mathbf{p}(n)\right)$ be solution to the stochastic dynamical system~\eqref{eq:stoc22}, with~$\mathbf{C}(n)$, and $\left(\mathbf{R}(n)\right)$ satisfying the assumptions presented in the beginning of this section, with~$f_{i,+}\in\mathcal{C}_{+}\cap \mathcal{P}_{+}$,~$g_{i,-}\in\mathcal{C}_{-}\cap \mathcal{P}_{-}$ and~$f_{i,-}\in\mathcal{C}_{-}$. Then,
\begin{equation}
\mathbb{P}\left(\mathbf{p}(n)\overset{n\rightarrow \infty}\longrightarrow E_{\ell},\mbox{ for some }\ell\right)=1.
\end{equation}
\end{theorem}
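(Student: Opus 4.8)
The plan is to feed the two local facts already in hand---Theorem~\ref{co:bounde} (positive-probability local attraction) and Theorem~\ref{th:ball} (recurrence of every cover $B_\delta:=\bigcup_{\ell}\overline{B}_\delta(E_\ell)$)---into the abstract recurrence/invariance principle of Theorem~\ref{th:markovhomog}, and then to sharpen ``attraction to the set $\mathcal{E}$'' into ``convergence to a single $E_\ell$.'' Since $\mathbf{p}(n)$ is a time-homogeneous Markov process, I would first verify the two hypotheses of Theorem~\ref{th:markovhomog} for $E=B_\delta$ and every sufficiently small $\delta$. Recurrence is exactly Theorem~\ref{th:ball}. For the invariance-with-positive-probability hypothesis I would reuse the event $\bigcap_{n\geq 0}\mathcal{A}_{\mathcal{D}_\ell}(n)$ from the proof of Theorem~\ref{co:bounde}: on this event every in-edge is reinforced by $f_{i,+}$ and every out-edge faded by $g_{i,-}$ at every step, so each coordinate is monotone and the trajectory both stays inside $\overline{B}_\delta(E_\ell)$ and converges to $E_\ell$ (note that the penalization rule $f_{i,-}$ never occurs, which is why $f_{i,-}\notin\mathcal{P}_-$ is harmless). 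Because the balls $\overline{B}_\delta(E_\ell)$ are pairwise disjoint for $\delta<\tfrac12$, any point of $B_\delta$ lies in exactly one of them, and Corollary~\ref{th:corosequence} supplies a lower bound $\epsilon>0$ on the conditional probability of this event that is uniform over the starting point and, by time-homogeneity, over the starting time. Hence $\mathbb{P}(\mathbf{p}(n)\in B_\delta\ \forall n\mid \mathbf{p}(0)\in B_\delta)>\epsilon$, and Theorem~\ref{th:markovhomog} gives that $\mathbf{p}(n)\in B_\delta$ eventually, almost surely.

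Running this for a sequence $\delta_m\downarrow 0$ and intersecting the countably many almost-sure events yields $d(\mathbf{p}(n),\mathcal{E})\to 0$ almost surely. To upgrade this to convergence to a single matrix, I would prove a \emph{no-jump} lemma. Fix $\delta_0<\tfrac12$ smaller than $\min_i\{f_{i,-}(\tfrac12),\,g_{i,-}(\tfrac12),\,1-f_{i,+}(\tfrac12)\}$, all strictly positive by strict monotonicity of the RPF laws. If $\mathbf{p}(n)\in\overline{B}_{\delta_0}(E_a)$ and $\mathbf{p}(n+1)\in\overline{B}_{\delta_0}(E_b)$ with $a\neq b$, then some coordinate $(i,j)$ on which $E_a,E_b$ differ would have to move from $(1-\delta_0,1]$ to $[0,\delta_0)$ (or the reverse) in one step. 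This is impossible: an admissible update sends a value $x>1-\delta_0>\tfrac12$ to one of $f_{i,+}(x),f_{i,-}(x),g_{i,-}(x)$, each exceeding $\delta_0$ (e.g.\ $f_{i,-}(x)>f_{i,-}(\tfrac12)>\delta_0$), so it cannot fall below $\delta_0$; symmetrically a value below $\delta_0<\tfrac12$ is sent below $f_{i,+}(\tfrac12)<1-\delta_0$ and cannot exceed $1-\delta_0$. Thus the nearest-center index cannot change in a single step while the trajectory stays within distance $\delta_0$ of $\mathcal{E}$. Combined with $d(\mathbf{p}(n),\mathcal{E})\to 0$ (which places the trajectory in $B_{\delta_0}$ for all large $n$), this forces the nearest center to be eventually constant, say $E_\ell$, and then $\|\mathbf{p}(n)-E_\ell\|_\infty=d(\mathbf{p}(n),\mathcal{E})\to 0$, i.e.\ $\mathbf{p}(n)\to E_\ell$.

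The main obstacle is precisely this last step: the three preceding theorems deliver attraction only to the finite set $\mathcal{E}$, and a discrete-time trajectory could a priori oscillate forever among neighborhoods of distinct $E_\ell$ while its distance to $\mathcal{E}$ tends to zero. Ruling this out is where the hypothesis that each update rule is \emph{strictly increasing} is indispensable---it is exactly what makes the no-jump lemma true and thereby turns set-convergence into point-convergence. The remaining work is bookkeeping: confirming that the $\epsilon$ of Theorem~\ref{co:bounde} is genuinely uniform in the starting point and time (so that Theorem~\ref{th:markovhomog} applies verbatim), that disjointness of the balls makes ``in $B_\delta$'' mean ``in one ball,'' and that the monotone image bounds above are compatible with the single chosen threshold $\delta_0$.
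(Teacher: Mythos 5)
Your proposal is correct, and its skeleton is exactly the paper's: the paper's entire proof of Theorem~\ref{th:final} is a one-sentence appeal to Theorems~\ref{co:bounde} and~\ref{th:ball} combined through Theorem~\ref{th:markovhomog}, so everything beyond that sentence is detail the authors leave implicit. Where you genuinely diverge is in how the argument is closed. You use only the \emph{invariance} content of the event $\bigcap_n \mathcal{A}_{\mathcal{D}_\ell}(n)$ (confinement to the ball), apply Theorem~\ref{th:markovhomog} to $B_\delta$, intersect over $\delta_m \downarrow 0$ to get $d\left(\mathbf{p}(n),\mathcal{E}\right)\to 0$, and then invoke your no-jump lemma to convert set-convergence into convergence to a single $E_\ell$; that lemma, with the threshold $\delta_0<\min_i\left\{f_{i,-}(\tfrac12),\,g_{i,-}(\tfrac12),\,1-f_{i,+}(\tfrac12)\right\}$, is correct, appears nowhere in the paper, and you are right that Theorem~\ref{th:markovhomog} alone cannot rule out eternal oscillation among the disjoint balls. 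An alternative closing -- arguably the one the authors intend -- uses the full strength of Theorem~\ref{co:bounde}: its $\epsilon$-probability event already forces $\mathbf{p}(n)\to E_\ell$, not merely confinement, so one can rerun the geometric-trials bookkeeping from the proof of Theorem~\ref{th:markovhomog} with ``success'' meaning convergence. At the first visit to $B_\delta$ (finite a.s.\ by Theorem~\ref{th:ball}) start a trial; if the all-reinforce/all-fade event fails, it fails at a finite time, after which Theorem~\ref{th:ball} supplies a fresh visit and a fresh trial; each trial succeeds with conditional probability at least $\epsilon$ (uniform by Corollary~\ref{th:corosequence} and time-homogeneity), so almost surely some trial succeeds, yielding point convergence in one stroke with no no-jump lemma and no countable intersection over $\delta_m$. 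Your route buys the ability to use the three stated theorems strictly as black boxes, at the cost of one extra (correct) lemma exploiting strict monotonicity of the RPF laws; the alternative is leaner but must reach inside the proof of Theorem~\ref{co:bounde} for its event structure. Either way the theorem is proved, so your proposal stands; the only minor bookkeeping to add is that $\delta_0$ must also be taken small enough for the invariance bound of Theorem~\ref{co:bounde} to apply, which costs nothing since both constraints are satisfied by all sufficiently small $\delta_0$.
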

The proof follows from Theorems~\ref{co:bounde}-\ref{th:ball} (via Theorem~\ref{th:markovhomog}) as suggested by Fig.~\ref{fig:ideaofproof}.
\begin{figure}
  \centering
  \includegraphics[width=.4\linewidth]{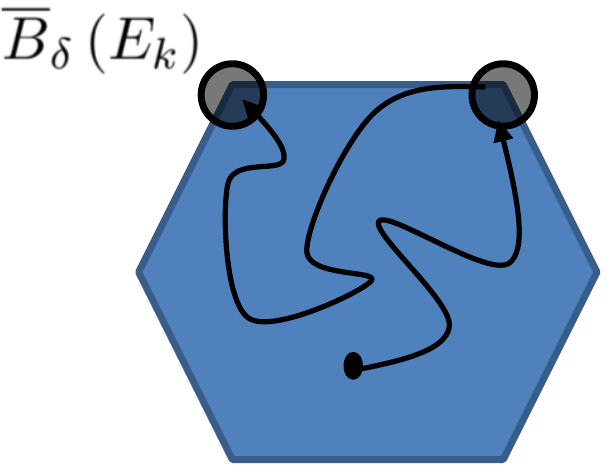}
  \caption{Summary of the proof to Theorem~\ref{th:final}: i) the extreme matrices~$E_{k}$ are local attractors (Theorem~\ref{co:bounde}); ii) any small neighborhood covering~$\mathcal{E}$ is recurrent (Theorem~\ref{th:ball}).}
  \label{fig:ideaofproof}
\end{figure}

Note that,~$\left(\mathbf{p}(n)\right)$ is a stochastic process whose limiting behavior is given by a random variable~$\mathbf{p}(\infty)$ (with support on~$\mathcal{E}$). The initial condition~$\mathbf{p}(0)$ is also a random variable, and therefore, the proper setting to characterize the dynamics in terms of the attractors and the corresponding basin of attraction is the dual space of probability measures on~$\left[0,1\right]^{N \times N}$, as established in Corollary~\ref{th:dual}.
\begin{corollary}\label{th:dual}
Let~$\mu_0$ be a probability measure on~$\left[0,1\right]^{N \times N}$ associated with the distribution of~$\mathbf{p}(0)$ and let~$\mathbf{p}(n)\overset{d}\sim\mu_n$. Then, under the same assumptions of Theorem~\ref{th:final}, there exists~$\eta \in {\sf co}\left\{\delta_{E_{1}},\ldots,\delta_{E_{M}}\right\}$ such that
\begin{equation}
\mu_{n}\Rightarrow \eta,\nonumber
\end{equation}
where~$\Rightarrow$ means convergence with respect to the weak-$\star$ topology on the (compact) space of probability measures on~$\left[0,1\right]^{N \times N}$ (refer to~\cite{billi}); and~${\sf co}$ stands for the convex hull of a set (refer to~\cite{convex}).
\end{corollary}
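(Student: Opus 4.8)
The plan is to deduce the weak-$\star$ convergence of the laws $\mu_n$ directly from the almost-sure convergence already established in Theorem~\ref{th:final}, invoking the elementary fact that almost-sure convergence of random variables forces convergence of their distributions. First I would use Theorem~\ref{th:final} to produce a limiting random variable: on an event of full probability the iterates $\mathbf{p}(n)$ converge to one of the matrices in $\mathcal{E}$, so $\mathbf{p}(\infty):=\lim_n \mathbf{p}(n)$ is well defined and takes values in the finite set $\mathcal{E}=\left\{E_1,\ldots,E_M\right\}$ almost surely. I would denote its law by $\eta$.

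Second, I would pin down $\eta$ and check that it lands in the advertised convex hull. Because $\mathbf{p}(\infty)$ is supported on the finite set $\mathcal{E}$, its law is the atomic measure $\eta=\sum_{\ell=1}^{M} q_\ell\,\delta_{E_\ell}$ with weights $q_\ell:=\mathbb{P}\left(\mathbf{p}(\infty)=E_\ell\right)$. Since these weights are nonnegative and sum to one (the limit lies in $\mathcal{E}$ with probability one), $\eta$ is a genuine convex combination of the Dirac masses, i.e.~$\eta\in{\sf co}\left\{\delta_{E_1},\ldots,\delta_{E_M}\right\}$, as required.

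Third, I would upgrade the almost-sure statement to weak-$\star$ convergence. For an arbitrary bounded continuous test function $\phi:\left[0,1\right]^{N\times N}\to\mathbb{R}$ (automatically bounded on the compact cube), continuity gives $\phi(\mathbf{p}(n))\to\phi(\mathbf{p}(\infty))$ almost surely, and the bounded convergence theorem then yields $\int \phi\,d\mu_n=\mathbb{E}\left[\phi(\mathbf{p}(n))\right]\to\mathbb{E}\left[\phi(\mathbf{p}(\infty))\right]=\int \phi\,d\eta$. Since $\phi$ was arbitrary, this is precisely the statement $\mu_n\Rightarrow\eta$ in the weak-$\star$ topology on the (compact) space of probability measures on $\left[0,1\right]^{N\times N}$.

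I do not expect a genuine obstacle here: the corollary is a soft, dual reformulation of the main theorem in the language of measures, and the compactness of $\left[0,1\right]^{N\times N}$ together with the finiteness of $\mathcal{E}$ renders every step automatic. The only points deserving a line of care are verifying that the limiting law is supported on $\mathcal{E}$ --- so that $\eta$ lies in the convex hull itself and not merely in its weak-$\star$ closure --- and noting that bounded continuous test functions suffice to characterize weak-$\star$ convergence, both of which follow immediately from compactness.
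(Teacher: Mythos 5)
Your proposal is correct and is essentially the paper's own argument: the paper partitions the sample space into the events $\Omega_\ell=\left\{\omega : \mathbf{p}(n,\omega)\to E_\ell\right\}$, whose probabilities are exactly your weights $q_\ell$, and then runs the same bounded-continuous-test-function plus bounded-convergence computation to identify the limit as $\eta=\sum_\ell \mathbb{P}\left(\Omega_\ell\right)\delta_{E_\ell}$. The only cosmetic difference is that you phrase it via the limit random variable $\mathbf{p}(\infty)$ rather than via the partition, which changes nothing of substance.
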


\begin{proof}
Consider the partition~$\bigcup_{i=1}^M \Omega_i$ where
\begin{equation}
\Omega_i\overset{\Delta}= \left\{\omega\in\Omega\,:\, \mathbf{p}(n,\omega)\rightarrow E_i\right\}.\nonumber
\end{equation}
Now, let~$\phi\,:\, \left[0,1\right]^{N \times N} \to \mathbb{R}$ be any continuous bounded function and note that
\begin{eqnarray}
\int \phi \,\,d\mu_n & = & \int \phi \,\,d\mathbb{P}_{\star \mathbf{p}(n)}= \int \phi(\mathbf{p}(n,\omega))\,\, d\mathbb{P}(\omega)\nonumber\\
& = & \sum_{i=1}^M \int_{\Omega_i} \phi(\mathbf{p}(n,\omega))\,\, d\mathbb{P}  \overset{n\rightarrow \infty}\longrightarrow \sum_{i=1}^M \int_{\Omega_i} \phi(E_i) \,\,d\mathbb{P}= \sum_{i=1}^M \phi(E_i)\mathbb{P}\left(\Omega_i\right)\nonumber\\
& = & \sum_{i=1}^M \int \phi \,\,d\delta_{E_i} \mathbb{P}\left(\Omega_i\right) =  \int \phi \,\,d\left(\sum_{i=1}^M \mathbb{P}\left(\Omega_i\right) \,\, \delta_{E_i}\right)= \int \phi \,\,d\eta\nonumber
\end{eqnarray}
where~$\mathbb{P}_{\star \mathbf{p}(n)}$ is the pushforward of~$\mathbb{P}$ by~$\mathbf{p}(n)$, i.e.,~$\mathbb{P}_{\star \mathbf{p}(n)}\left(A\right)=\mathbb{P}\left(p(n)\in A\right)$ for any~$A\in\mathcal{F}$. Since the above holds for all bounded and continuous~$\phi$ and~$\eta=\sum_{i=1}^M \mathbb{P}\left(\Omega_i\right)\delta_{E_i}\in {\sf co}\left\{\delta_{E_{1}},\ldots,\delta_{E_{M}}\right\}$, the desired convergence in distribution follows.
\end{proof}
Corollary~\ref{th:dual} asserts the existence of a map
\begin{equation}
\begin{array}{ccc}
\mathcal{H}\,:\,\mathcal{P}\left(\left[0,1\right]^{N \times N}\right) & \longrightarrow & {\sf co}\left\{\delta_{E_{1}},\ldots,\delta_{E_{M}}\right\}\\
\mu_{0} & \longmapsto & \mu_{\infty}
\end{array}
\end{equation}
from the set of probability measures on~$\left[0,1\right]^{N \times N}$ to the set of probability measures with support on~$\mathcal{E}$, that maps any initial distribution~$\mu_{0}\in\mathcal{P}\left(\left[0,1\right]^{N \times N}\right)$ to its corresponding attractor~$\mu_{\infty}$. Fully understanding the dynamical system~$\left(\mathbf{p}(n)\right)$ is equivalent to characterizing~$\mathcal{H}$. While we do not obtain an explicit closed form formula for~$\mathcal{H}$, one of our goals in Section~\ref{sec:simula} is to obtain some intuitive characterization of this map.

Figure~\ref{fig:flowchart} shows the flow-chart with the interdependencies among the theorems developed in this subsection.
\begin{figure}
  \centering
  \includegraphics[width=.8\linewidth]{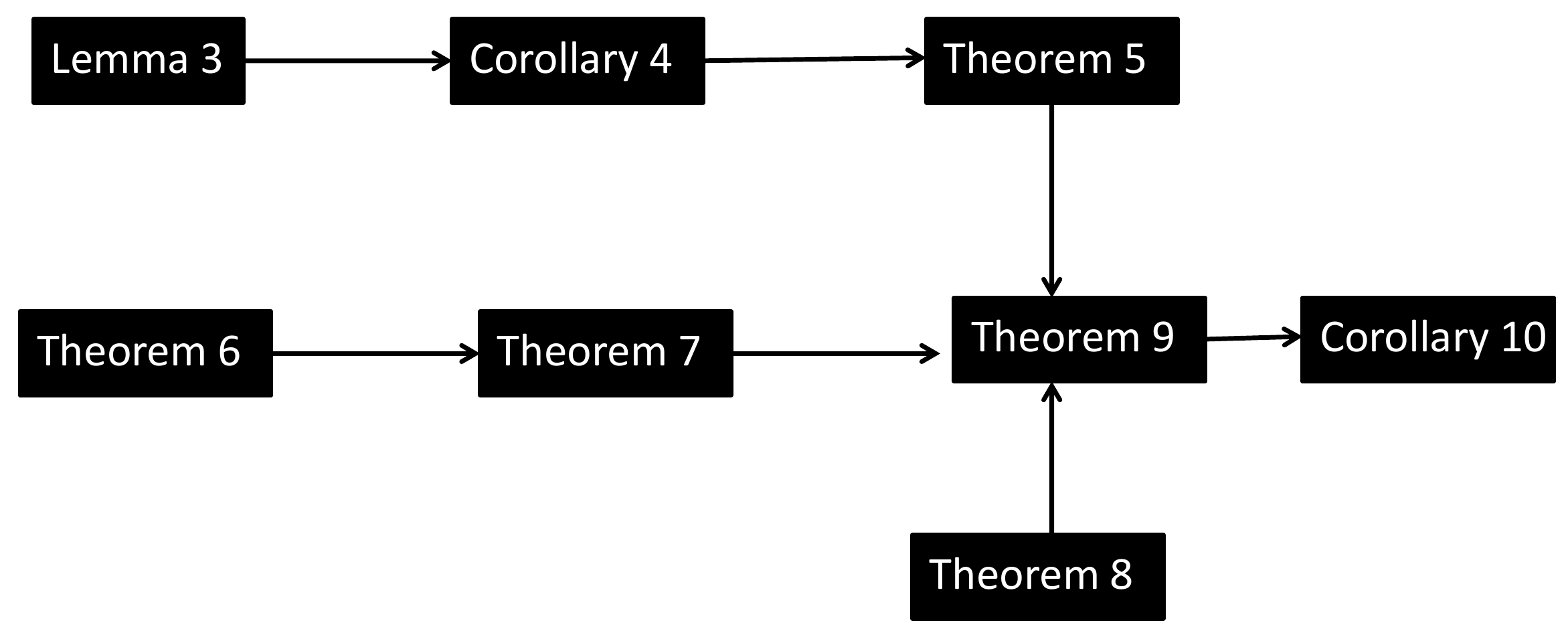}
  \caption{Flow chart of interdependencies among the theorems.}
  \label{fig:flowchart}
\end{figure}


\subsection{Convergence Result Revisited: Calls with Fading Perturbations}\label{subsec:pert}

In this subsection, we assume the same dynamics as in~\eqref{eq:stoc22} for the process~$\left(\mathbf{p}(n)\right)$, with the assumptions described in the first paragraph of this section, except that the likelihood of a call may be subject to perturbations at random times as follows:
\begin{equation}
\mathbb{P}\left(C_{ij}(n)=1\left|\mathbf{p}(n),\,x_{i}(n)\right.\right) = p_{ij}(n)\mathbf{1}_{\left\{x_i(n)=0\right\}}+k_{ij}\left(n,p_{ij}(n)\right)\mathbf{1}_{\left\{x_i(n)=1\right\}},\label{eq:pertlaw}
\end{equation}
where~$x_i(n)$ is the indicator of a perturbation at time~$n$: if~$x_i(n)=1$, then node~$i$ is in perturbation mode and~$i$ calls~$j$ with probability~$k_{ij}\left(n,p_{ij}(n)\right)$; otherwise,~$x_i(n)=0$, i.e.,~$i$ is not in perturbation mode and it follows the standard dynamics based on~$p_{ij}(n)$, and~$\left(k(n,x)\right)_n$ is any (possibly random) sequence of functions on the interval~$\left(0,1\right)$.

Note that, for a general stochastic process~$\left(\mathbf{x}(n)\right):=\left(x_1(n),\ldots,x_{N}(n)\right)$, the dynamical system~$\left(\mathbf{p}(n)\right)$ does not necessarily converge in this perturbed case. If we further assume that the perturbation factor decays over time as follows
\begin{equation}
\sum_{n}\mathbb{P}\left(x_{i}(n)=1\right) < \infty, \label{eq:BC}
\end{equation}
then we obtain convergence as we show in the next theorem.
\begin{theorem}\label{th:finalwexplor}
Assume that~$\left(\mathbf{p}(n)\right)$ follows the dynamics in~\eqref{eq:stoc22} with the calling law given in~\eqref{eq:pertlaw} and the vector random sequence~$\left(\mathbf{x}(n)\right)$ satisfying condition~\eqref{eq:BC}. Let the assumption on~$\left(R(n)\right)$ to be the same as in Theorem~\ref{th:final}. Then,
\begin{equation}
\mathbb{P}\left(\mathbf{p}(n)\overset{n\to \infty}\longrightarrow E_{\ell}\in\mathcal{E},\,\mbox{for some}\,\ell\right)=1.\nonumber
\end{equation}
\end{theorem}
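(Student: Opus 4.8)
The plan is to reduce the perturbed dynamics to the unperturbed system of Theorem~\ref{th:final}: I would show that under~\eqref{eq:BC} the perturbations cease after a finite random time, and then exploit the fact that convergence to $\mathcal{E}$ is a tail property. The arbitrary functions $k_{ij}$ in~\eqref{eq:pertlaw} will turn out to be irrelevant, since they only govern the (finitely many) perturbed steps.

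First I would apply the first Borel--Cantelli lemma to~\eqref{eq:BC}. For each fixed node $i$, $\sum_n \mathbb{P}(x_i(n)=1)<\infty$ forces $\mathbb{P}(x_i(n)=1 \text{ i.o.})=0$; taking the union over the $N$ nodes shows that almost surely only finitely many perturbation events occur across the whole network. Hence there is an almost surely finite random time $T$ with $x_i(n)=0$ for every $i$ and every $n\geq T$. For such times the calling law~\eqref{eq:pertlaw} collapses to $\mathbb{P}(C_{ij}(n)=1\mid \mathbf{p}(n), x_i(n)=0)=p_{ij}(n)$, i.e. exactly the unperturbed law, so the tail of the trajectory is governed by the dynamics of Theorem~\ref{th:final} regardless of the $k_{ij}$.

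To make the reduction rigorous I would condition on the entire perturbation sequence $\mathbf{x}=(\mathbf{x}(n))_n$ rather than on a future event. Because the calls and responses at time $n$ are generated from $\mathbf{p}(n)$ and the current indicator $\mathbf{x}(n)$ alone (the memorylessness built into the model), given $\mathbf{x}$ the process $(\mathbf{p}(n))$ is a Markov chain whose transition kernel at time $n$ coincides with the unperturbed kernel whenever $\mathbf{x}(n)=\mathbf{0}$. For almost every realization of $\mathbf{x}$ — those with finitely many ones, which is a set of full measure by the previous paragraph — there is a last perturbation time $t_0=t_0(\mathbf{x})$, and for $n>t_0$ the conditioned process is a time-homogeneous Markov chain driven by the unperturbed kernel started from the (arbitrary) state $\mathbf{p}(t_0+1)$. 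Since Theorem~\ref{th:final} holds from an arbitrary initial condition and since $\{\mathbf{p}(n)\to E_\ell \text{ for some }\ell\}$ depends only on the tail of the trajectory, applying that theorem to this chain from time $t_0+1$ yields $\mathbb{P}(\mathbf{p}(n)\to\mathcal{E}\mid \mathbf{x})=1$ for almost every $\mathbf{x}$; the tower property then gives $\mathbb{P}(\mathbf{p}(n)\to\mathcal{E})=\mathbb{E}\!\left[\mathbb{P}(\mathbf{p}(n)\to\mathcal{E}\mid\mathbf{x})\right]=1$, which is the claim.

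I expect the main obstacle to be precisely this conditioning step: one must ensure that conditioning on the eventual absence of perturbations (a future event) does not distort the transition kernel at earlier times. Conditioning on the whole sequence $\mathbf{x}$ circumvents this, because the memoryless generation of $(C(n),R(n))$ from $(\mathbf{p}(n),\mathbf{x}(n))$ makes $(\mathbf{p}(n))$ conditionally Markov with the correct kernel, and the only property of $\mathbf{x}$ actually used is that $t_0(\mathbf{x})<\infty$ almost surely. A more computational alternative would re-derive the recurrence bound of Theorem~\ref{th:ball} and the local-attraction bound of Theorem~\ref{co:bounde} directly in the perturbed setting, noting that $\mathbb{P}(x_i(n)=1)\to 0$ makes the recurrence estimate hold for all sufficiently large $n$; but that route forces one to track the loss of the Markov structure and of the homogeneity used in Theorem~\ref{th:markovhomog}, so the conditioning argument above is the cleaner path.
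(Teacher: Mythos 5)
Your proposal is correct and follows essentially the same route as the paper: apply Borel--Cantelli to~\eqref{eq:BC} to conclude that perturbations cease after an almost surely finite time, then invoke Theorem~\ref{th:final} (from an arbitrary state, using that convergence to~$\mathcal{E}$ is a tail property) to govern the unperturbed tail. The only difference is packaging — the paper conditions on the increasing events~$Ev_{N}=\left\{x(n)=0,\ \forall n\geq N\right\}$ and lets~$\mathbb{P}\left(Ev_{N}\right)\nearrow 1$, whereas you condition on the entire perturbation path and finish with the tower property; your variant is, if anything, slightly more careful about the subtlety of conditioning on a future event, which the paper passes over implicitly.
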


\begin{proof}
Define the sequence of events (indexed by~$N$)
\begin{equation}
Ev_{N}\overset{\Delta}=\left\{\omega\in \Omega\,:\, x(n,\omega)=0,\,\forall{n\geq N}\right\},\nonumber
\end{equation}
and observe that~$Ev_N \subset Ev_{N+1}$. Now, note that, when the stochastic dynamical system~$\left(\mathbf{p}(n)\right)$ is restricted to the event~$Ev_N$ (for some~$N$), it follows eventually the standard dynamics (without perturbation), and, therefore, by Theorem~\ref{th:final}, $\left(\mathbf{p}(n)\right)$ converges almost surely to a matrix in~$\mathcal{E}$ on the event~$Ev_{N}$, for any~$N$. More formally,
\begin{equation}
\mathbb{P}\left(\mathbf{p}(n)\overset{n\rightarrow \infty}\longrightarrow E_{\ell},\,\mbox{for some }\ell\, \left| \,Ev_{N}\right.\right)=1,\nonumber
\end{equation}
i.e.,~$\left(\mathbf{p}(n)\right)$ converges almost surely in the induced probability subspace
\begin{equation}
\left(\widetilde{\Omega},\widetilde{F},\widetilde{\mathbb{P}}\right):=\left(Ev_N,\,\mathcal{F}\left|_{Ev_N},\,\mathbb{P}\left(\cdot\left|\right. Ev_N\right)\right.\right).\nonumber
\end{equation}
Let
\begin{equation}
A\overset{\Delta}=\left\{\omega\in\Omega\,:\,\mathbf{p}(n,\omega)\overset{n\rightarrow \infty}\longrightarrow E_{\ell},\,\mbox{for some }\ell\right\},\nonumber
\end{equation}
then,
\begin{equation}
\mathbb{P}\left(A\right)=\underbrace{\mathbb{P}\left(A\left|\right. Ev_N\right)}_{=1}\mathbb{P}\left(Ev_N\right)+\mathbb{P}\left(A\left|\right. \overline{Ev_N}\right)\mathbb{P}\left(\overline{Ev_N}\right)\geq \mathbb{P}\left(Ev_N\right),\nonumber
\end{equation}
for all~$N\in\mathbb{N}$. This implies in particular that
\begin{equation}\label{eq:lasteqconv}
\mathbb{P}\left(\mathbf{p}(n)\overset{n\rightarrow \infty}\longrightarrow E_{\ell},\,\mbox{for some }\ell\right)\geq \mathbb{P}\left(Ev_{N}\right)\nearrow \mathbb{P}\left(x(n)=0,\mbox{eventually} \right)=1,
\end{equation}
where the last equality in~\eqref{eq:lasteqconv} follows from the assumption in~\eqref{eq:BC} and the Borel-Cantelli lemma~\cite{williamsprobability}. Thus,~$\mathbf{p}(n)$ converges almost surely to a binary matrix in~$\mathcal{E}$ and the theorem is proved.
\end{proof}
Even though almost sure convergence is preserved under sparse, in time, perturbations, the strong-attractors of the dynamical system~$\left(\mathbf{p}(n)\right)$ are sensitive to perturbations. This may lead to important implications on social networks whose evolution follows the reinforcement and penalization dynamics with fading perturbations as described in this section.

In this section, we proved that under quite broad conditions on the distributions of the response, and the rules of reinforcement, penalization, and fading, any dynamical system in the family in~\eqref{eq:stoc22} converges almost surely to the subset~$\mathcal{E}$ of the set of binary matrices as time goes to infinity, regardless of the initial condition. In other words, for any initial distribution~$\mu_{0}$ on~$\left[0,1\right]^{N \times N}$,~$\left(\mathbf{p}(n)\right)$ converges weakly to a random variable~$\mathbf{p}(\infty)$ with support on~$\mathcal{E}$. The next section illustrates finer properties of the long-term behavior of the system through a set of numerical simulations. We observe macroscopic properties of the attractors by computing the histogram of the number of nodes \emph{versus} out-flow degree of the limiting matrices~$\mathbf{p}(\infty)$ and through other statistics of the emergent networks. We illustrate, in particular, that our dynamical system exhibits the emergence and downfall of latent leaders in the long run.

\section{Numerical Simulations}\label{sec:simula}

The goal of this section is to illustrate via numerical simulations that our family of dynamical systems~\eqref{eq:stoc22} displays a rich set of long-term behaviors that depend on the specific configuration of the parameters involved, namely, the initial distribution of~$\left(\mathbf{p}(n)\right)$, the response distribution, the number of nodes, and the laws RPF (though the sensitivity of the system to the latter is not explored in the simulations). In particular, we illustrate that our family readily captures the emergence of latent leaders that contrast with a crowd of isolated nodes (i.e., nodes with no out-flow connection); this is similar to the behavior observed in multi-organizational networks (EMONs) such as support networks which emerged after the Katrina Hurricane disaster~\cite{katrina}. Also, we apply our family of stochastic dynamical systems~\eqref{eq:stoc22} to model the downfall of veteran leaders due to the rise of new ones, as often observed in social networks (e.g.,~\cite{yingdalu}). The downfall of older dominant leaders is due to competition induced by the limited capacity of response of each node. More formally, we refer to a leader as a node that is ranked amongst the top nodes by their out-flow degree. In the long run, such a leader can reach out to many nodes in the network with guaranteed response. In other words, such node plays a major role in the global decision making and the overall network dynamics.


We assume throughout (unless stated otherwise):

$\bullet$ \textbf{Initial Condition.} $p_{ij}(0)\sim \mbox{Uniform}(0,1)$, i.e., the state of each edge is initially drawn uniformly randomly from the interval~$(0,1)$.

$\bullet$ \textbf{Size of the Network.} $N=200$ nodes;

$\bullet$ \textbf{Laws of Reinforcement and Penalization.}
\begin{equation}\begin{array}{cc}
f_{i,+}(x) = \frac{1+x}{2}, & f_{i,-}(x) = g_{i,-}(x) = \frac{x}{2},
\end{array}\label{eq:rfprules}
\end{equation}
and note that~$f_{i,+}\in \mathcal{C}_{+}\cap \mathcal{P}_{+}$, and~$f_{i,-},g_{i,-}\in \mathcal{C}_{-}\cap \mathcal{P}_{-}$;

$\bullet$ \textbf{Calls.} A call from node~$i$ to node~$j$ is made with probability~$p_{ij}(n)$ at time~$n$ for simulations in Subsection~\ref{subsec:latent}, and it will be subject to fading perturbations in the simulations in Subsection~\ref{subsec:rise};

$\bullet$ \textbf{Responses.} $Q_{i}=10$ is the capacity of response of each node~$i$. In words: i) if any node~$i$ receives at most~$10$ incoming calls, it must answer all of them; ii) if any node receives more than~$Q_{i}=10$ calls, then it must choose~$10$ of them to answer. The rule to choose which one to answer is based on network level aspects of the callers as explained next. Recall that our convergence results, Theorems~\ref{th:final}-\ref{th:finalwexplor} and Corollary~\ref{th:dual}, hold true regardless of the bias rule (and indeed, in all simulations, the dynamical system~$\left(\mathbf{p}(n)\right)$ was empirically observed to converge).

Regarding the response type, it can be of two types: \emph{democratic} or \emph{biased}. In the democratic mode, we assume that, if a node~$i$ receives more calls than its capacity~$Q_{i}$, then it chooses uniformly randomly~$Q_{i}$ nodes among the callers to answer -- which, as a result, will reinforce the corresponding connections. In the biased case, we assume that, once~$i$ receives more calls than its capacity, it chooses the callers to respond to based on their out-flow degree, i.e.,~$\sum_{\ell}p_{k\ell}$. In particular, node~$i$ will select sequentially one by one the callers to respond. It chooses a caller~$k_1$, first, with probability:
\begin{equation}
\frac{\exp\left(\theta \sum_{\ell}p_{k_1\ell}\right)}{\sum_{m \mbox{ called }i}\exp\left(\theta \sum_{\ell}p_{m\ell}\right)}.\nonumber
\end{equation}
Then, it removes this selected caller~$k_1$ from the list and selects the next caller by drawing from
\begin{equation}
\frac{\exp\left(\theta \sum_{\ell}p_{k_2\ell}\right)}{\sum_{m\neq k_1 \mbox{ called }j}\exp\left(\theta \sum_{\ell}p_{m\ell}\right)},\nonumber
\end{equation}
and so on -- until it exhausts its capacity. The parameter~$\theta\in\left(0,\infty\right)$ accounts for the bias in the choice. In particular, if~$\theta$ is large, it means that the responder will strongly weight reputable nodes -- where the reputation is drawn from the out-flow degree. If~$\theta=0$, then the system operates in a fully democratic mode.

\textbf{Remark on convergence.} Note that the convergence in Theorems~\ref{th:final},~\ref{th:dual} and~\ref{th:finalwexplor}, does not rely on the particular response law to select the callers, but only on the fact that each node~$i$ responds necessarily to
\begin{equation}
\min\left\{C_i,\,\#\mbox{ of callers to }i\right\}\nonumber
\end{equation}
at each time~$n$, or in other words, on the fact that~$\left(\mathbf{R}(n)\right)$ obeys~\eqref{eq:responsecapa}. Also, the particular choice of the RPF rules, in~\eqref{eq:rfprules}, is not relevant for convergence as long as they belong accordingly to the sets~$\mathcal{P}_{+}\cap \mathcal{C}_{+}$ and~$\mathcal{P}_{-}\cap \mathcal{C}_{-}$. Indeed, we observed convergence in all simulations presented in this section. As mentioned before, even though the~$0-1$~law for the dynamical system~$\left(\mathbf{p}(n)\right)$ is robust to the initial conditions, the different RPF laws and the type of response, their particular choices certainly affect the long term properties of the system, i.e., the attractors, as it will be illustrated in Subsection~\ref{subsec:latent}.



\subsection{Emergence of Latent Leaders}~\label{subsec:latent}

In Figs.~\ref{fig:test2} and~\ref{fig:test1}, we assume a democratic and biased dynamical system, respectively. For the simulations in both figures, we assume the same initial conditions (in distribution), and observe quite distinct typical attractors. In particular, we observe that as one allows for bias -- for this case,~$\theta=2$ and the number of nodes~$N=200$ -- the limiting behavior of the system is characterized typically by the emergence of a few nodes with large out-degree and close to~$90\%$ of nodes with few or no out-flow connections.

\begin{figure}
\centering
\begin{subfigure}{.5\textwidth}
  \centering
  \includegraphics[width=1\linewidth]{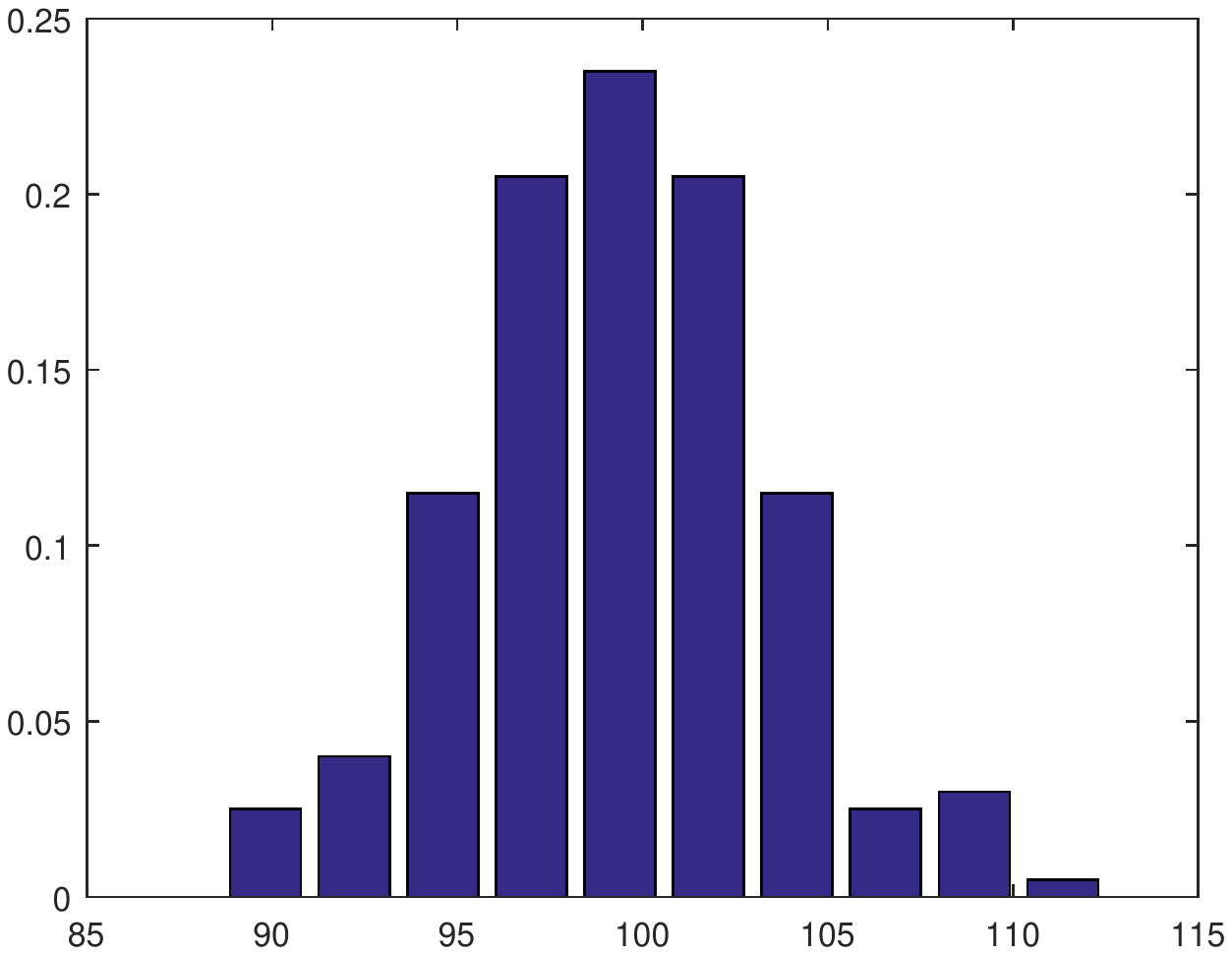}
  \caption{Initial histogram}
  \label{fig:sub1}
\end{subfigure}%
\begin{subfigure}{.5\textwidth}
  \centering
  \includegraphics[width=1\linewidth]{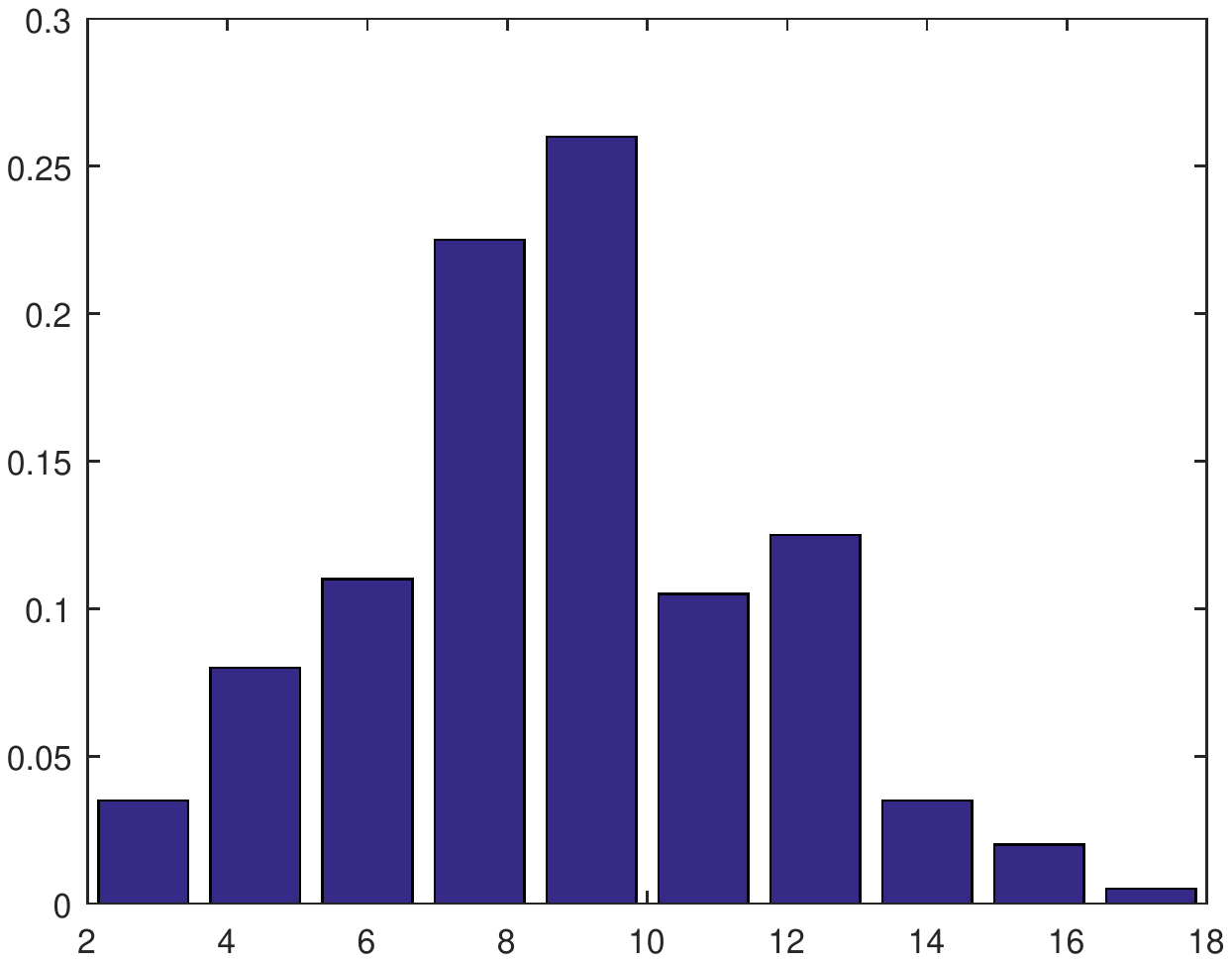}
  \caption{Final Histogram}
  \label{fig:sub2}
\end{subfigure}
\caption{Democratic system with~$\theta=0$.}
\label{fig:test2}
\end{figure}

\begin{figure}
\centering
\begin{subfigure}{.5\textwidth}
  \centering
  \includegraphics[width=1\linewidth]{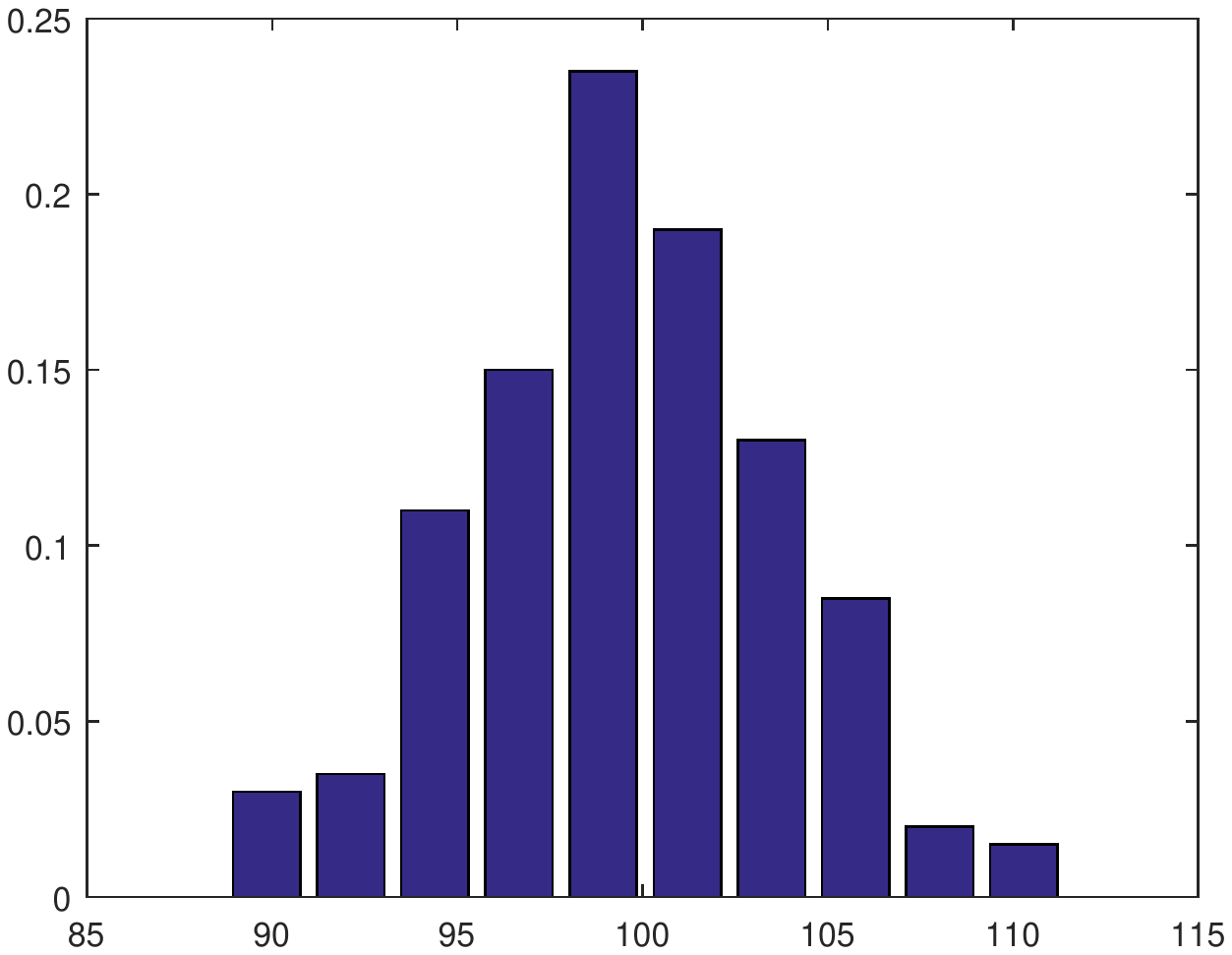}
  \caption{Initial histogram}
  \label{fig:sub1}
\end{subfigure}%
\begin{subfigure}{.5\textwidth}
  \centering
  \includegraphics[width=1\linewidth]{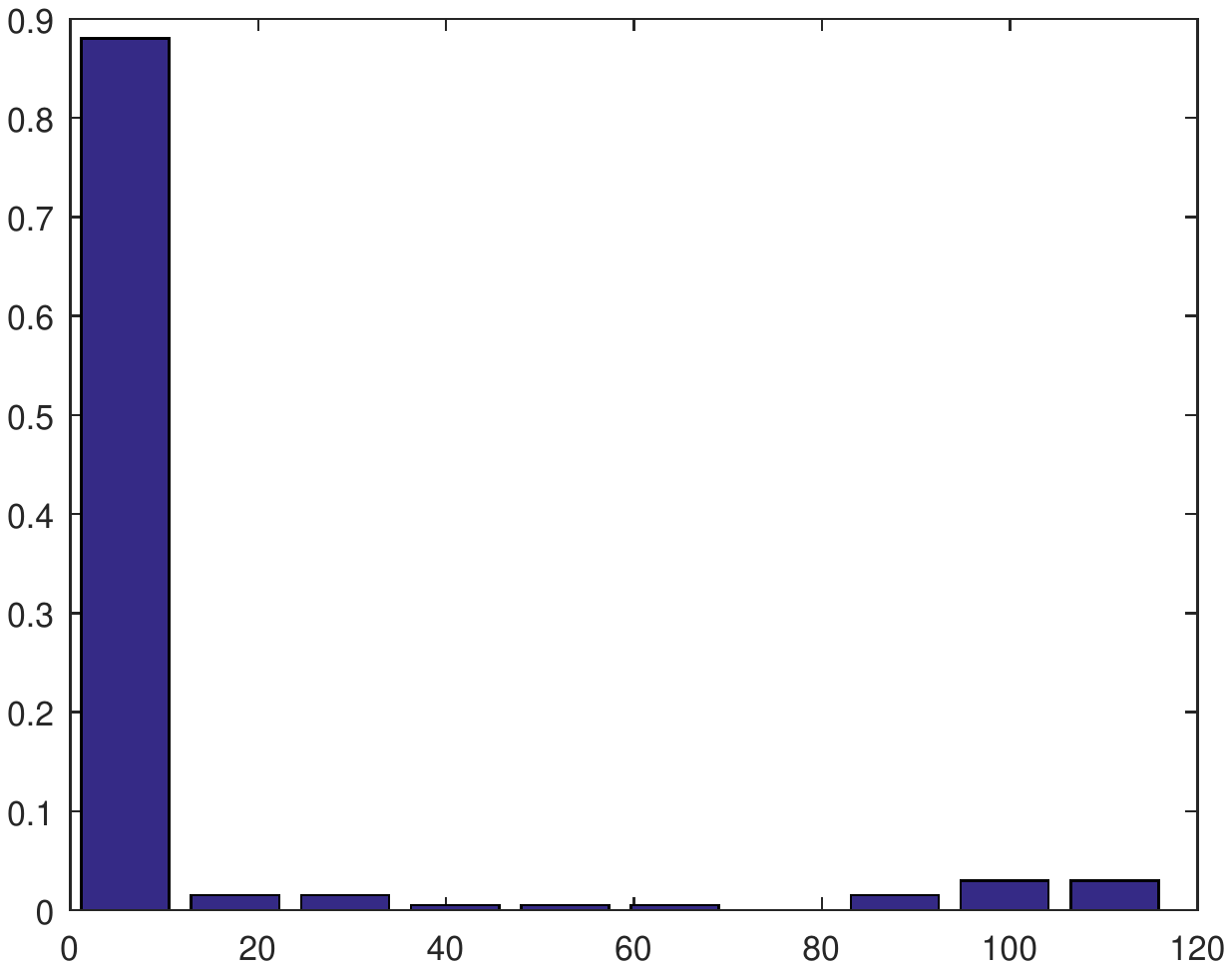}
  \caption{Final Histogram}
  \label{fig:sub2}
\end{subfigure}
\caption{Biased system with~$\theta=2$.}
\label{fig:test1}
\end{figure}

The greater the number of nodes, the steeper tends to be the power-law like behavior of the limiting typical graph, i.e., the histogram shape is not scale-invariant. Fig.~\ref{fig:test33} illustrates the limiting graph for different number of nodes, when the bias is small,~$\theta=0.1$. The response capacity~$Q_{i}$ for each node~$i$ is normalized to be~$5\%$ of the total number of nodes in the network.

\begin{figure}
\centering
\begin{subfigure}{.4\textwidth}
  \centering
  \includegraphics[width=1.0\linewidth]{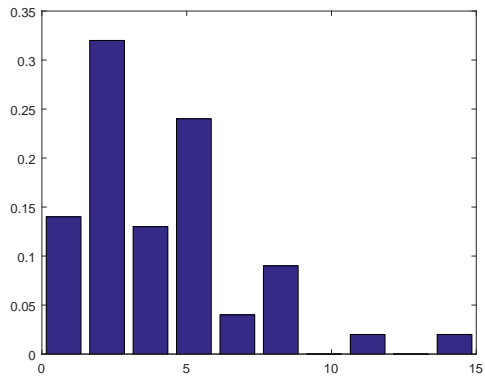}
  \caption{$100$ nodes with~$\theta=0.1$ and capacity of response~$5$.}
  \label{fig:sub1}
\end{subfigure}%
\begin{subfigure}{.4\textwidth}
  \centering
  \includegraphics[width=1.0\linewidth]{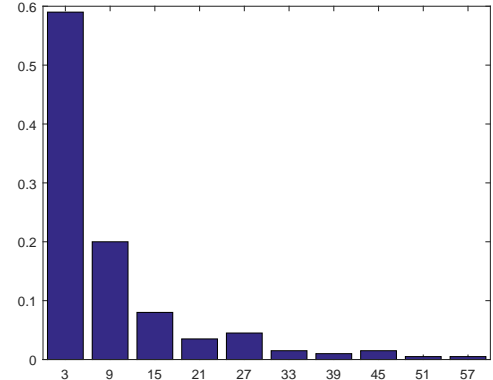}
  \caption{$200$ nodes with~$\theta=0.1$ and capacity of response~$10$.}
  \label{fig:sub1}
\end{subfigure}
\begin{subfigure}{.4\textwidth}
  \centering
  \includegraphics[width=1.0\linewidth]{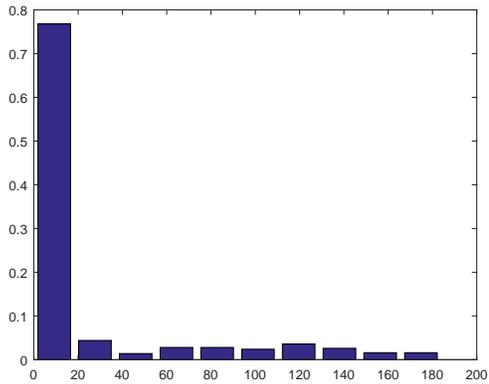}
  \caption{$500$ nodes with~$\theta=0.1$ and capacity of response~$25$.}
  \label{fig:sub2}
\end{subfigure}
\begin{subfigure}{.4\textwidth}
  \centering
  \includegraphics[width=1.0\linewidth]{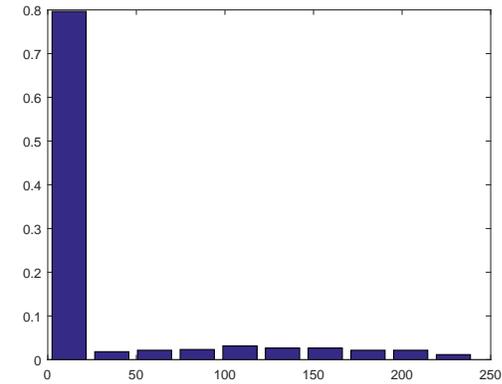}
  \caption{$600$ nodes with~$\theta=0.1$ and capacity of response~$30$.}
  \label{fig:sub2}
\end{subfigure}
\begin{subfigure}{.4\textwidth}
  \centering
  \includegraphics[width=1.0\linewidth]{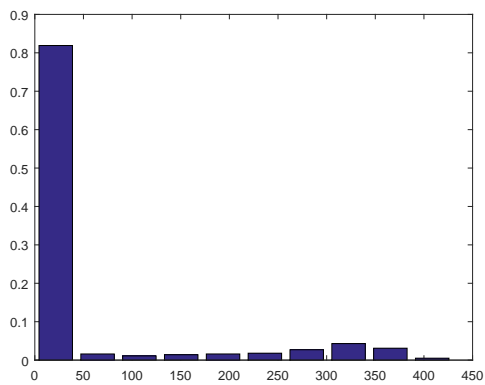}
  \caption{$1000$ nodes with~$\theta=0.1$ and capacity of response~$50$.}
  \label{fig:sub2}
\end{subfigure}
\caption{Dynamical system with bias~$\theta=0.1$ and initial condition drawn from a uniform distribution on the interval $\left(0,1\right)$ at each edge.}
\label{fig:test33}
\end{figure}

The long-term behavior of the system depends not only on the number of nodes, but also on the initial condition (distribution of~$\mathbf{p}(0)$). For instance, if one assumes the consensus initial probability,
\begin{equation}
\mathbf{p}(0)=\frac{1}{N-1}\left(\mathbf{1}\mathbf{1}^{\top}-I_{N\times N}\right),\nonumber
\end{equation}
where~$I_{N\times N}$ is the~$N \times N$ identity matrix and~$\mathbf{1}\in\mathbb{R}^{N}$ is the vector with all entries equal to~$1$, then the emergent leaders are far less dominant as compared to the uniform distribution case as illustrated in Fig.~\ref{fig:test23}\footnote{The histograms presented in this section are obtained using the function `hist' in Matlab. The centering of the bins is automatically provided by this function, and even though the bins are not centered at integers, the process~$\left(\mathbf{p}(n)\right)$ did converge to a binary matrix.}.

\begin{figure}
\centering
\begin{subfigure}{.5\textwidth}
  \centering
  \includegraphics[width=1\linewidth]{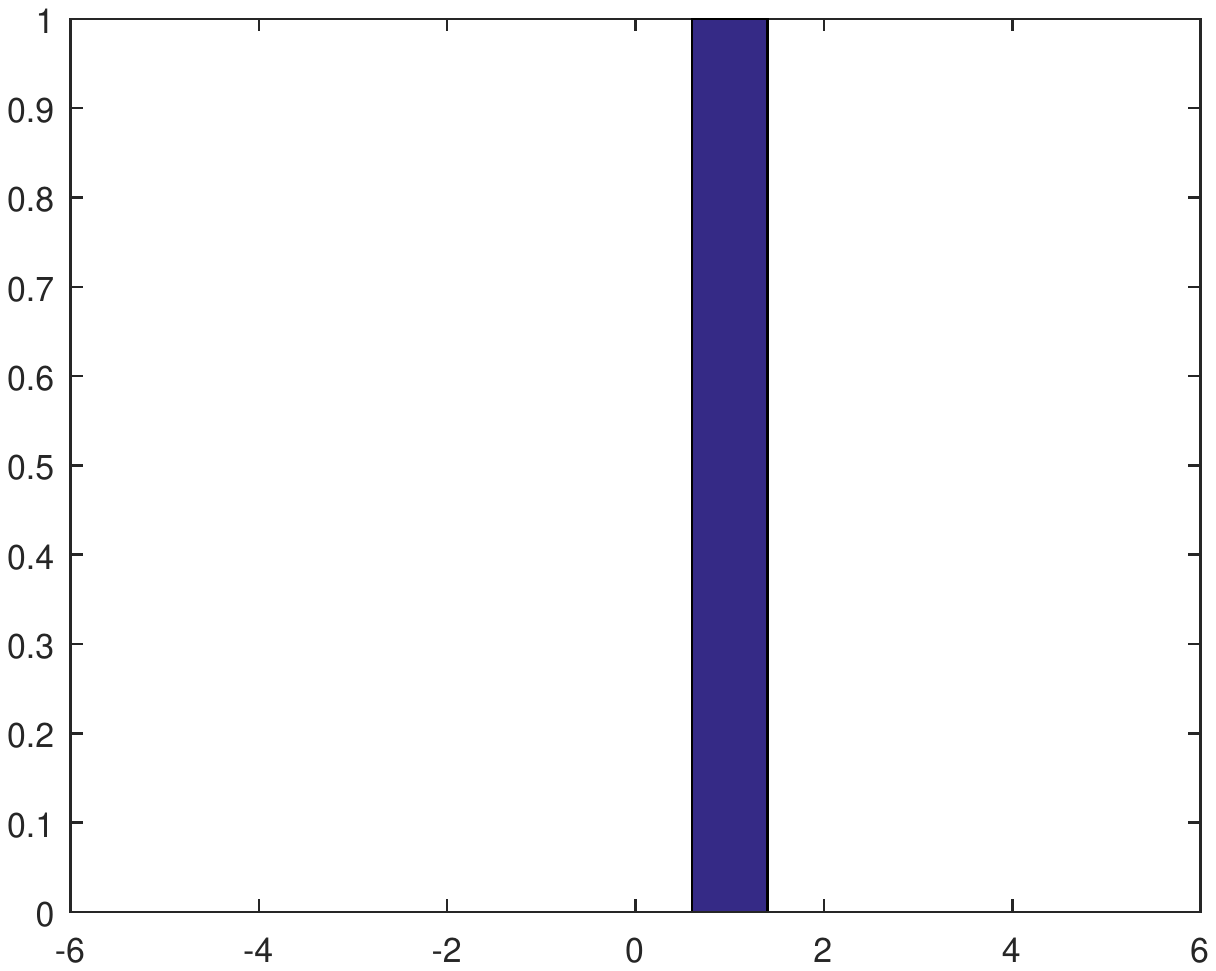}
  \caption{Initial histogram}
  \label{fig:even}
\end{subfigure}%
\begin{subfigure}{.5\textwidth}
  \centering
  \includegraphics[width=1\linewidth]{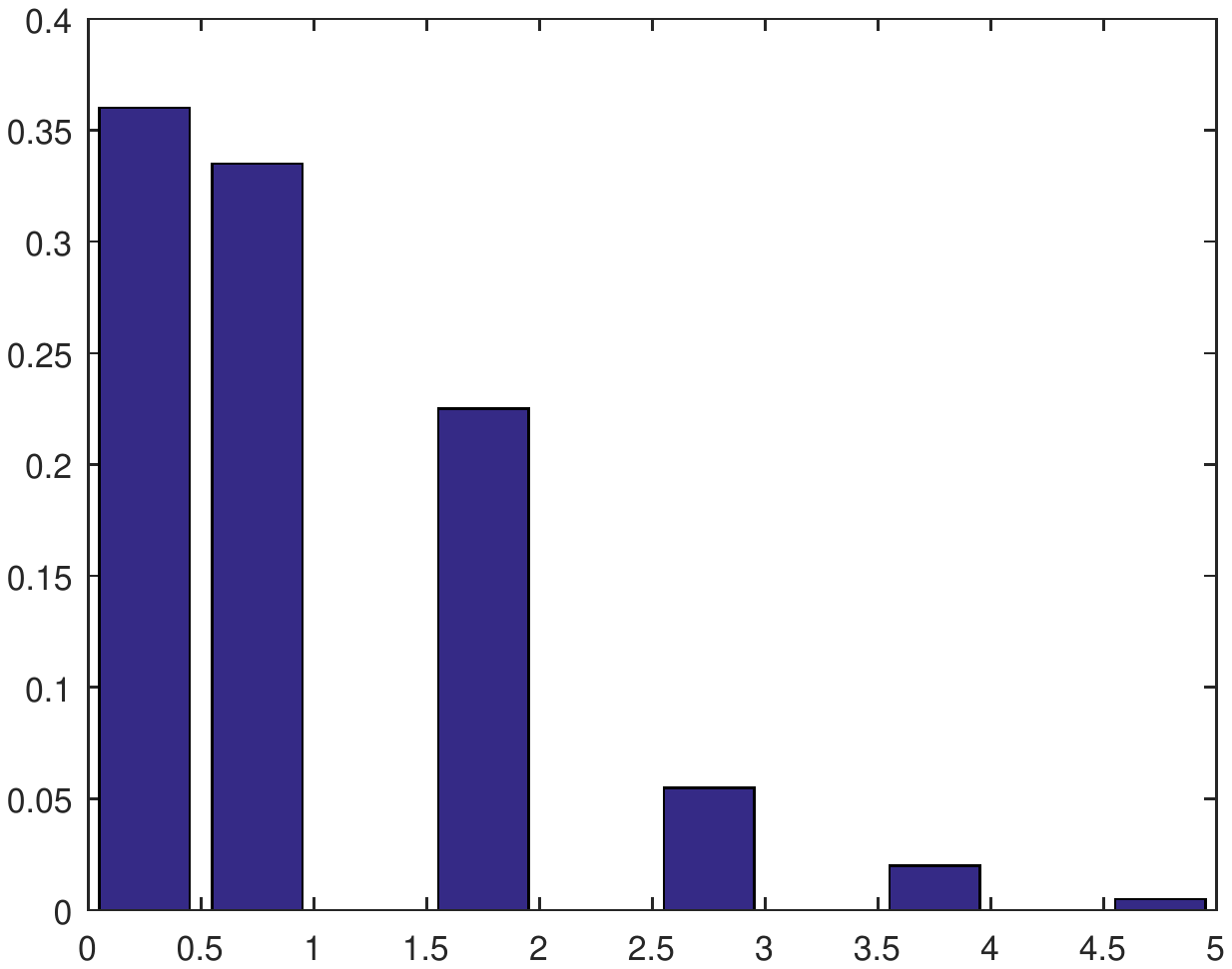}
  \caption{Final Histogram}
  \label{fig:even2}
\end{subfigure}
\caption{System was biased with~$\theta=2$.}
\label{fig:test23}
\end{figure}


One can ask if, for example, the highly dominant nodes in Fig.\ref{fig:test1} were the nodes that were dominant to start with. This is indeed the typical case: quite rarely does one observe a non-dominant node becoming dominant in the system of Fig.\ref{fig:test1} -- uniform initial distribution and bias~$\theta=2$. We remark that the emergent multi-organizational network (EMON) reported in~\cite{katrina} exhibited similar steep qualitative properties: already influential organizations emerge as latent leaders as opposed to a large number of organizations with no (or low) out-flow connections.

In the next subsection, we illustrate that \emph{poor} nodes -- i.e., with low out-flow degree -- may overcome the odds and emerge as leaders in the long run (possibly defying older leaders) if endowed with (fading) exploration as explained below.

\subsection{Downfall of Old and Emergence of New Leaders}\label{subsec:rise}

In this subsection, we present an experiment that illustrates the possibility of a poor node to emerge as a new leader as well as the downfall of a leader due to the competition brought about by new leaders. The main element that leads to the rise of a poor node is exploration. There are plenty of features that may be considered to explain the emergence of leaders, -- e.g., intrinsic quality, experience etc. (e.g.~\cite{yingdalu,krishnan}); in this subsection, we illustrate that our model captures the phenomenon by tracking a few state variables and parameters: calls, responses, capacity of response, and exploration factor.

Exploration is an important factor impacting the evolution of networks. A node explores when it has no connection or only a weak connection to other nodes. Yet, it attempts to call such nodes. The new dynamics are similar to the dynamics in the previous subsection with now
\begin{equation}
\mathbb{P}\left(C_{ij}(n)=1\left|\mathbf{p}(n), \widetilde{x}_i(n)\right.\right)=p_{ij}(n)\mathbf{1}_{\left\{\widetilde{x}_i(n)=0\,\, \vee\,\, p_{ij}(n)\geq \mbox{th}\right\}}+k\mathbf{1}_{\left\{\widetilde{x}_i(n)=1\,\, \wedge \,\, p_{ij}(n)< \mbox{th}\right\}},\label{eq:callsubsec}
\end{equation}
where,~$\widetilde{x}_i(n)$ is the indicator of exploration at time~$n$ of node~$i$: if~$\widetilde{x}_i(n)=1$, then node~$i$ is in exploration mode; otherwise, if~$\widetilde{x}_i(n)=0$, then~$i$ is not in exploration mode. Whenever in exploration mode, if~$p_{ij}(n)$ is \emph{small}, i.e., smaller than a threshold~$\mbox{th}$, then node~$i$ will still contact node~$j$ with likelihood given by~$k$. Therefore, we assume that~$k> \mbox{th}$.

We also assume that the exploration process~$\left(\mathbf{\widetilde{x}}(n)\right)$, decays over time as follows
\begin{equation}
\sum_{n}\mathbb{P}\left(\widetilde{x}_{i}(n)=1\right) < \infty.\label{eq:BCE}
\end{equation}
Note that we are under the purview of Theorem~\ref{th:finalwexplor} in Subsection~\ref{subsec:pert}, as we can define
\begin{equation}
x_i(n):= \widetilde{x}_{i}(n)\mathbf{1}_{\left\{p_{ij}(n)< \mbox{th}\right\}},
\end{equation}
and~$k_{ij}\left(n,p_{ij}(n)\right)=k$ so that~\eqref{eq:callsubsec} can be restated exactly as~\eqref{eq:pertlaw} and moreover
\begin{equation}
\sum_{n}\mathbb{P}\left(x_{i}(n)=1\right) \leq \sum_{n}\mathbb{P}\left(\widetilde{x}_{i}(n)=1\right) < \infty.\nonumber
\end{equation}


We first let the system evolve as described in the previous subsection (i.e., with no exploration) to observe the emergence of leaders and a steep histogram of (number of nodes) \emph{versus} (out-flow degree). The system converges to a binary matrix whose histogram is illustrated in Fig.\ref{fig:even234}. From there, we endow some poor nodes ($9$ of them to be specific) with exploration to overcome the odds and emerge as new leaders. We also show that some of the previous leaders are not able to survive the competition -- remark that the capacity of response of the nodes is bounded by~$10$, which leads to a selective pressure based upon the out-flow degree reputation -- and these nodes fall swiftly.
\begin{figure}
\centering
\begin{subfigure}{.5\textwidth}
  \centering
  \includegraphics[width=1\linewidth]{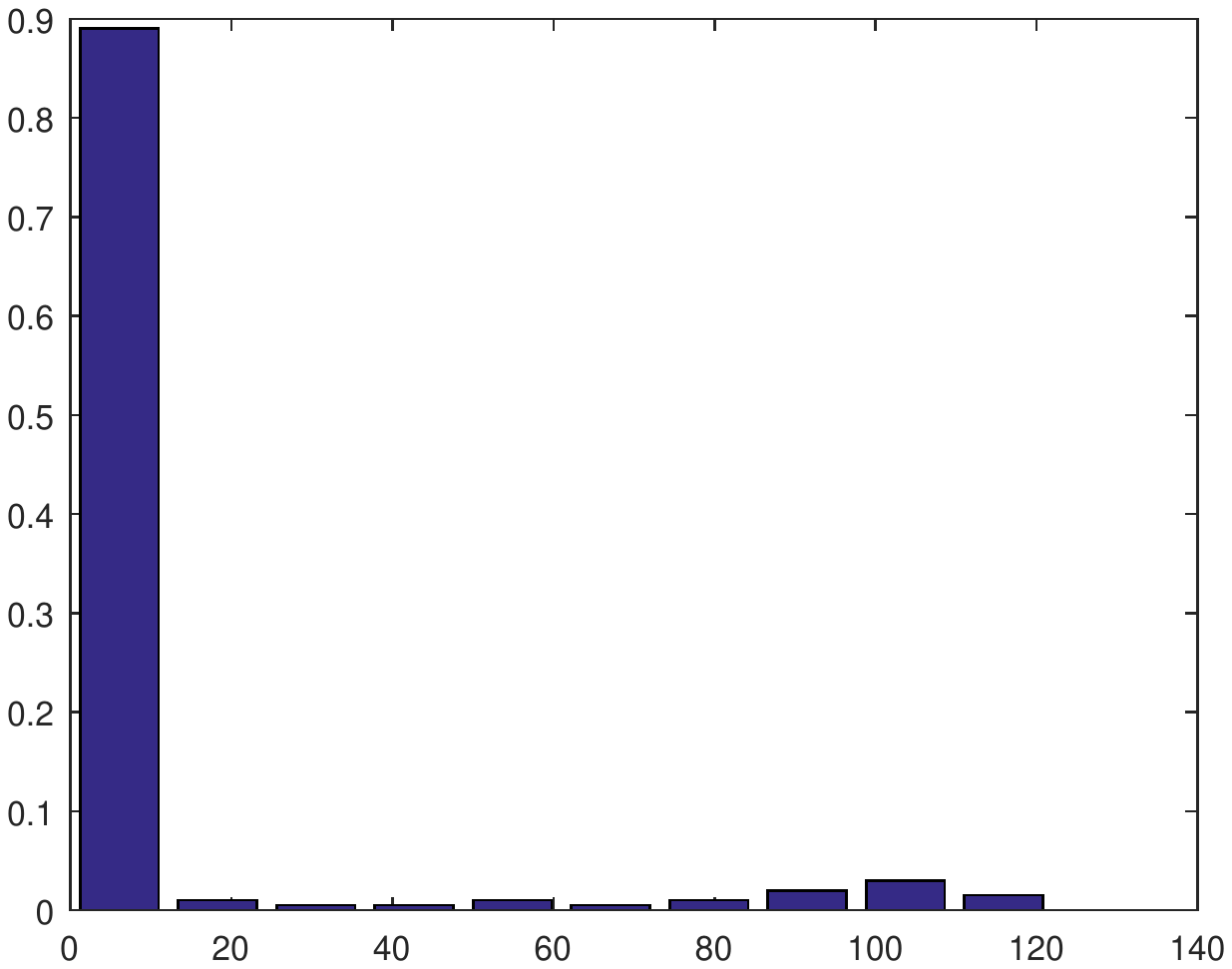}
  \caption{Initial histogram}
  \label{fig:even234}
\end{subfigure}%
\begin{subfigure}{.5\textwidth}
  \centering
  \includegraphics[width=1\linewidth]{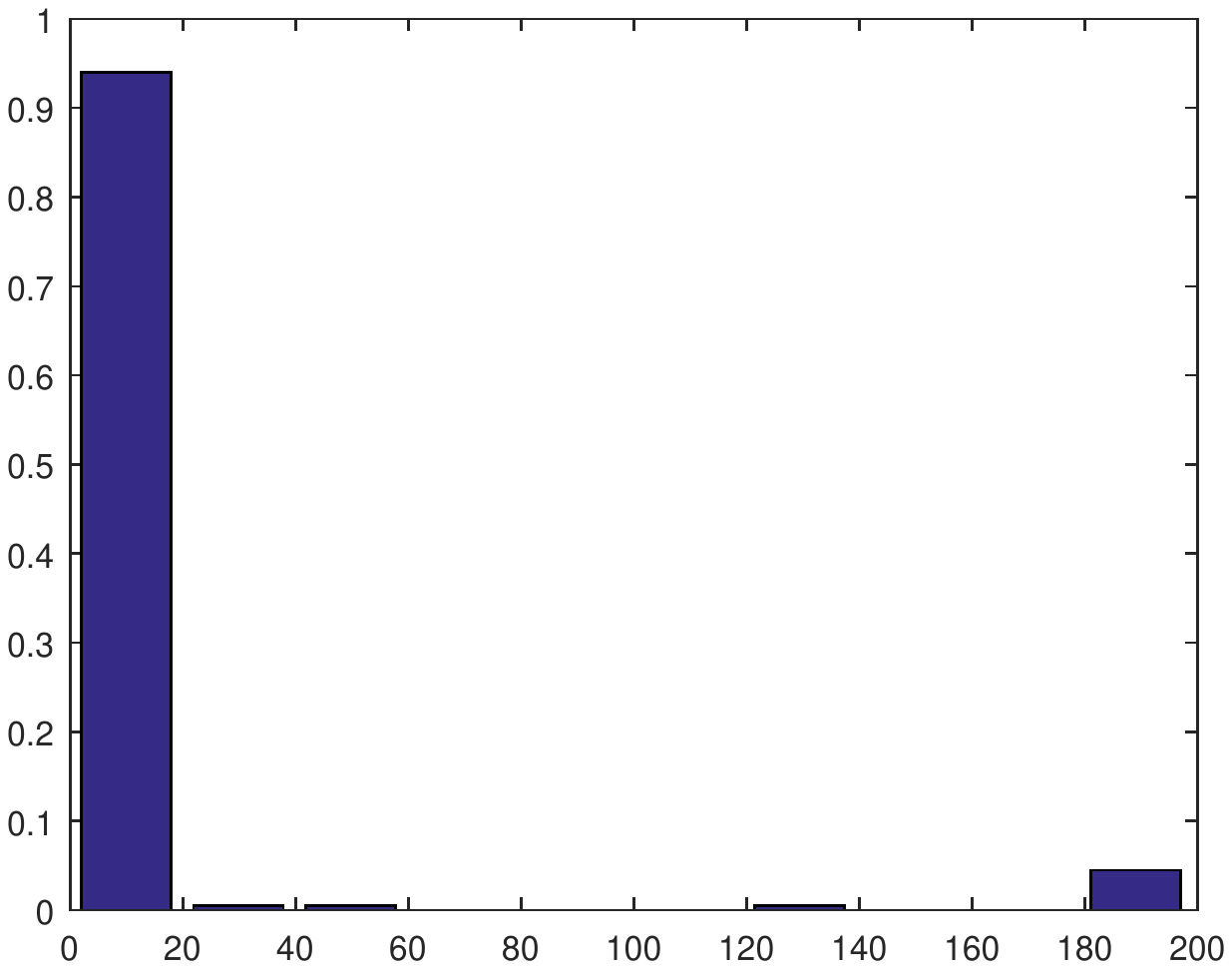}
  \caption{Final Histogram, after the rise of new leaders.}
  \label{fig:even256}
\end{subfigure}
\caption{Biased system with~$\theta=2$.}
\label{fig:alwaysteste2}
\end{figure}
Fig.\ref{fig:even256} depicts the histogram associated with the long term behavior of the system with the new emergent nodes.

Fig.\ref{fig:sub122},\ref{fig:sub12}, and~\ref{fig:sub22} illustrate the evolution of the outflow degree, i.e.,~$\sum_{j}p_{ij}(n)$, of some of the new emergent nodes.
\begin{figure}
\centering
\begin{subfigure}{.4\textwidth}
  \centering
  \includegraphics[width=1.0\linewidth]{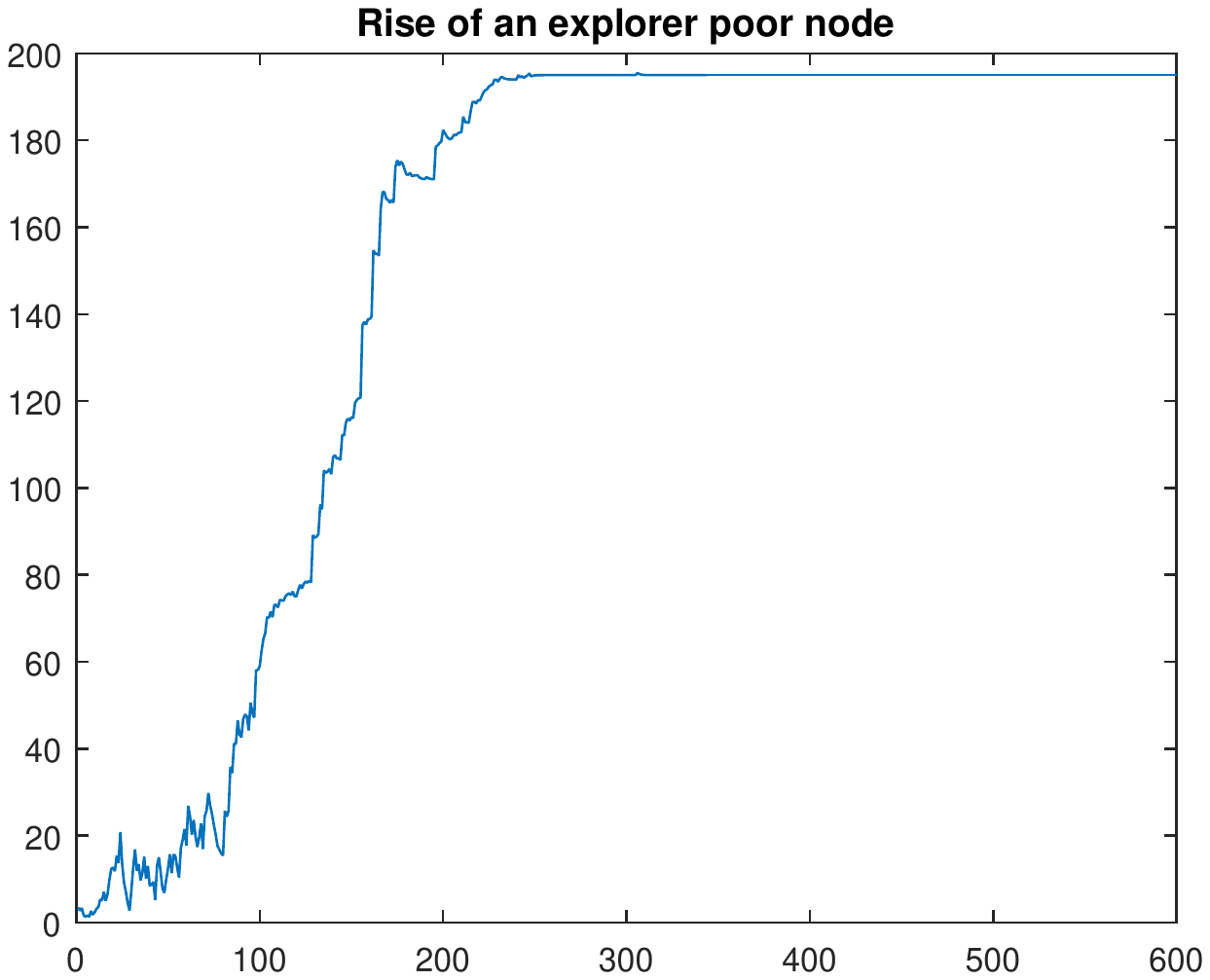}
  \caption{Turbulent rise of a node.}
  \label{fig:sub122}
\end{subfigure}%
\begin{subfigure}{.4\textwidth}
  \centering
  \includegraphics[width=1.0\linewidth]{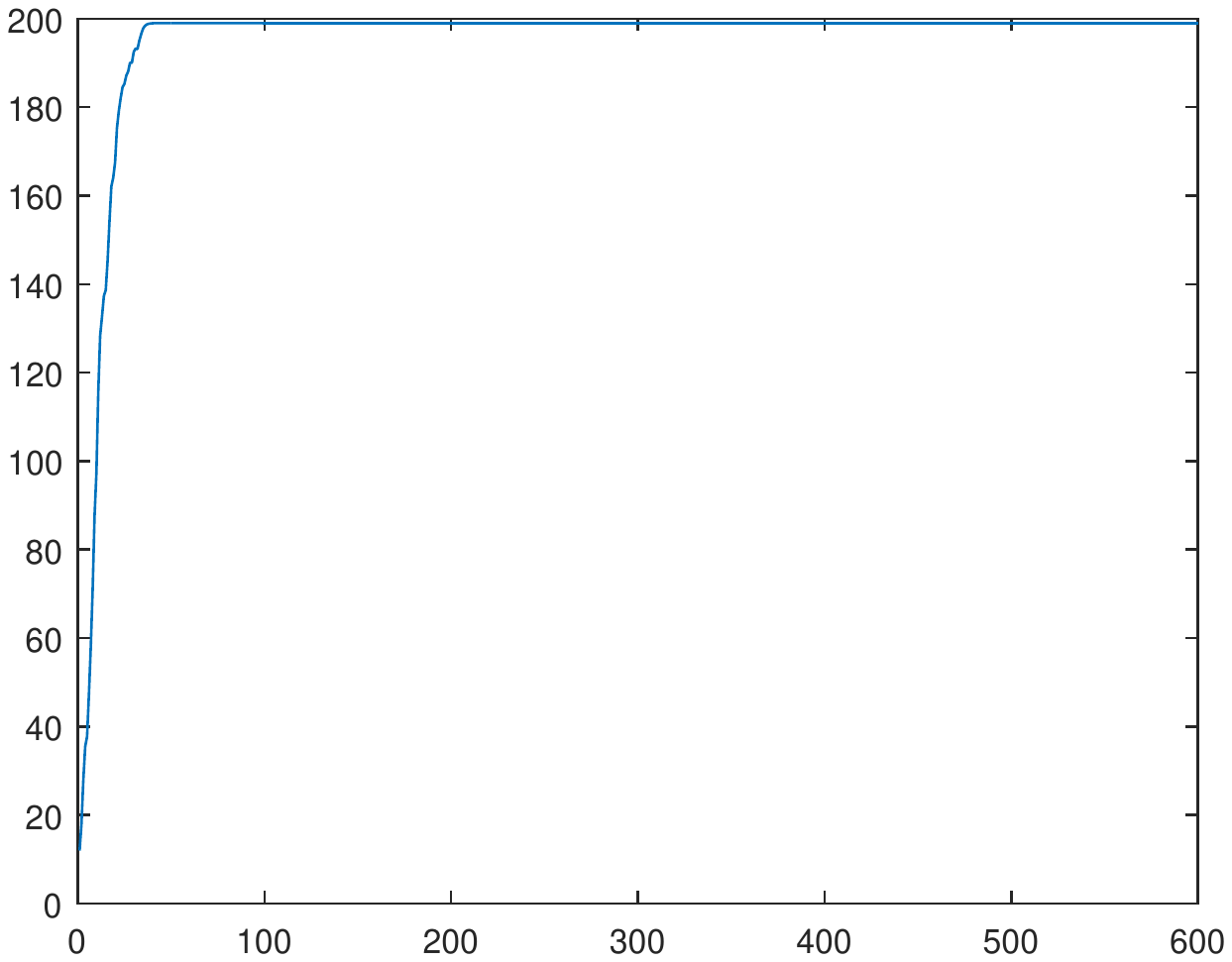}
  \caption{Rapid rise.}
  \label{fig:sub12}
\end{subfigure}
\begin{subfigure}{.4\textwidth}
  \centering
  \includegraphics[width=1.0\linewidth]{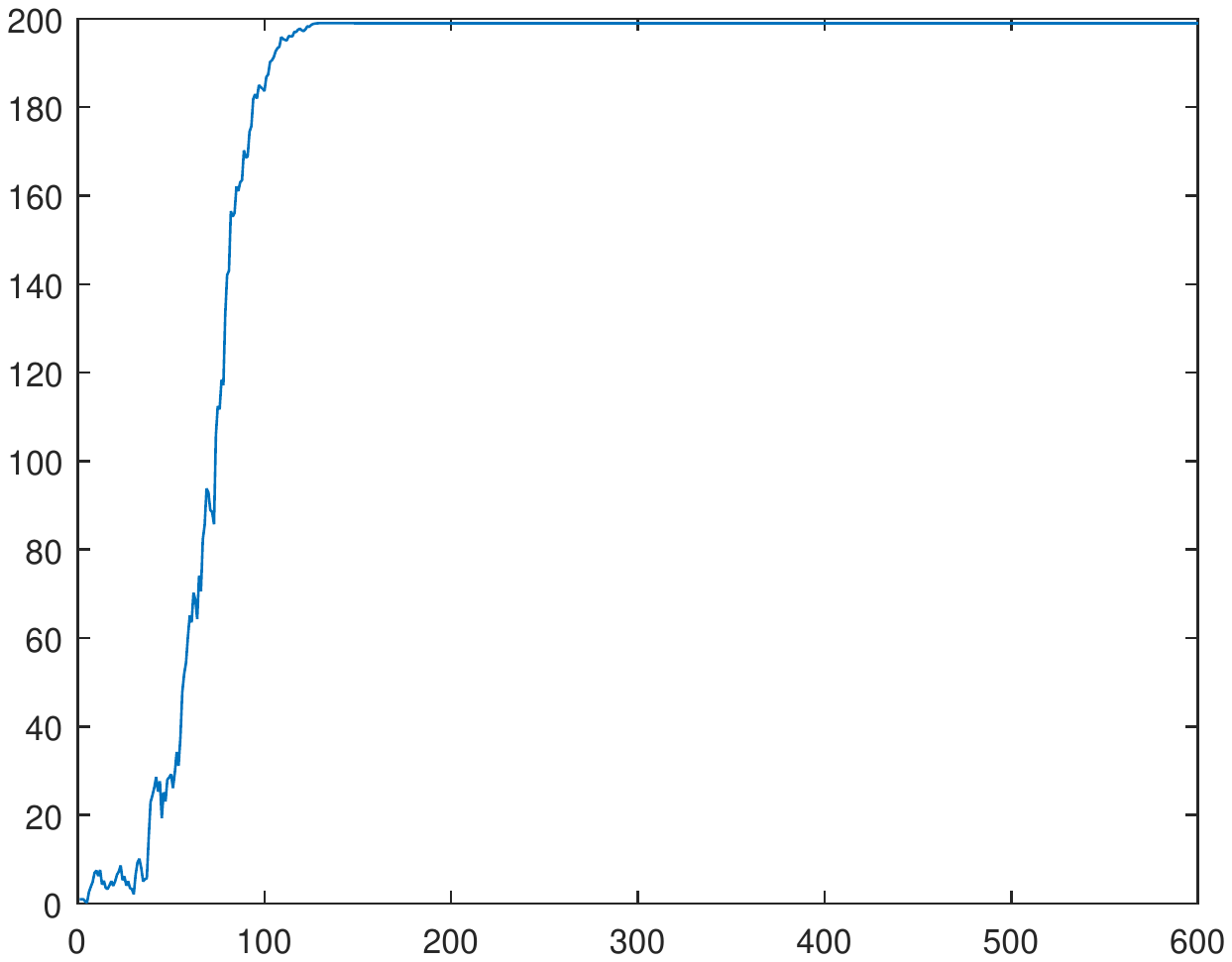}
  \caption{Turbulent rise.}
  \label{fig:sub22}
\end{subfigure}
\end{figure}
Although some nodes may rise swiftly, others may experience initial turbulence -- others may not rise to the top at all. The typical path of a poor node to the top may be depicted as follows: first, it increases its outflow degree by calling nodes that happen to have some sparse response capacity (this is done by random exploration) and then, after gaining some reputation, the node becomes more competitive and more likely to overcome other nodes. In this way, a poor node may be able to crawl all the way to the top. The competition induced by the limited capacity of response of the nodes in the network leads also to the downfall of some (possibly leaders) nodes as illustrated in Fig.~\ref{fig:evenabs}-\ref{fig:evenabs23}.
\begin{figure}
\centering
\begin{subfigure}{.5\textwidth}
  \centering
  \includegraphics[width=1\linewidth]{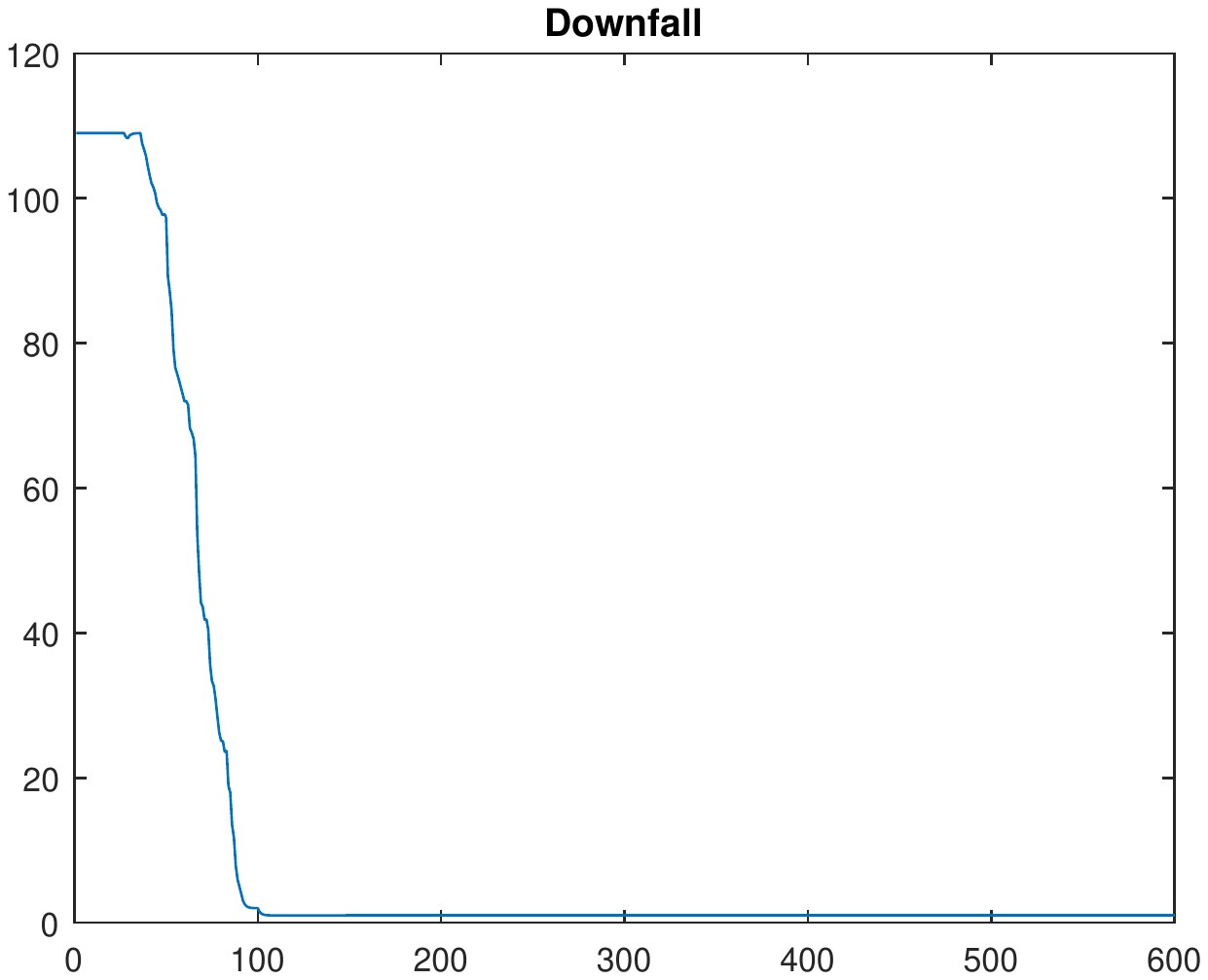}
  \caption{Downfall of a leader.}
  \label{fig:evenabs}
\end{subfigure}%
\begin{subfigure}{.5\textwidth}
  \centering
  \includegraphics[width=1\linewidth]{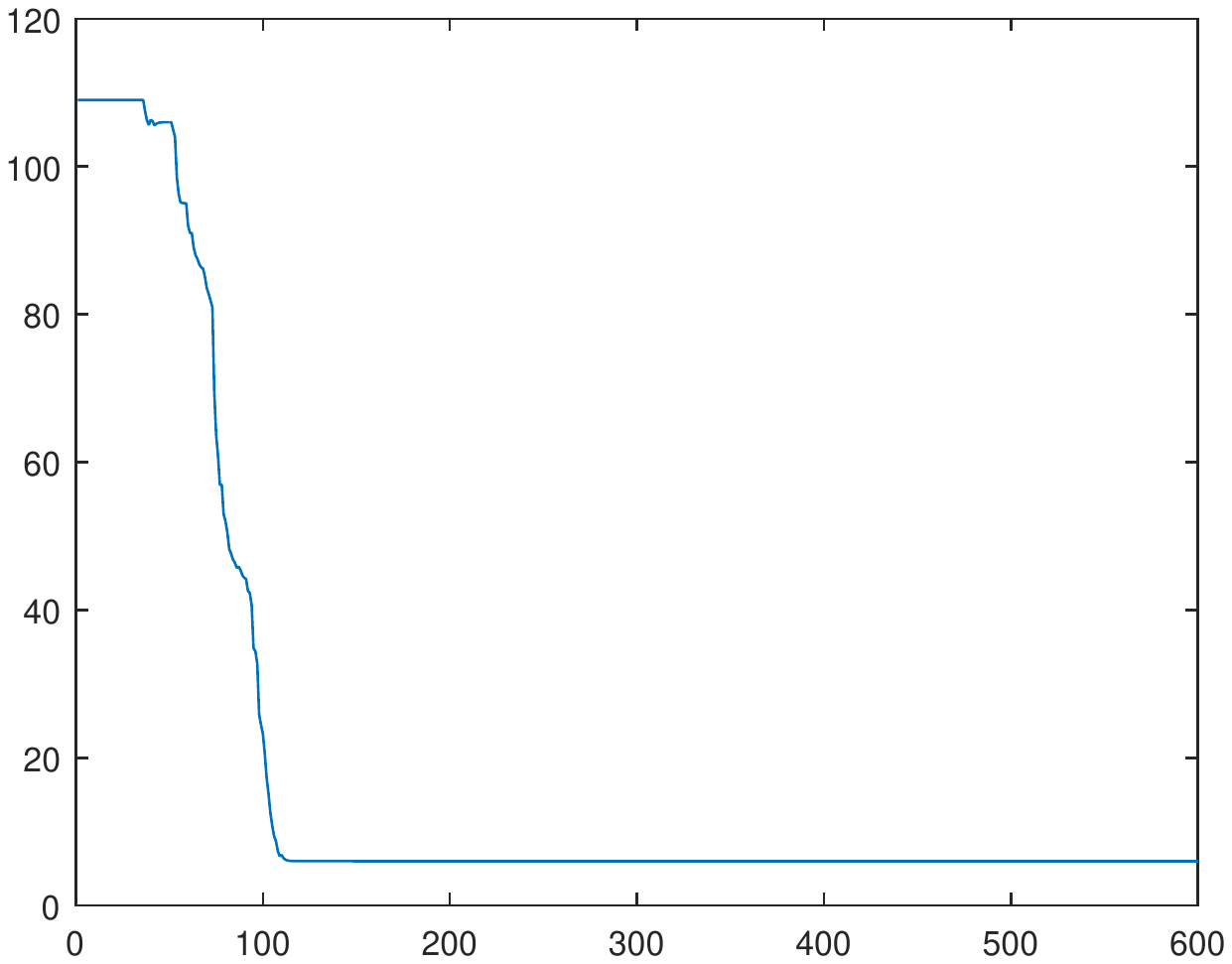}
  \caption{Downfall of a leader}
  \label{fig:evenabs23}
\end{subfigure}
\caption{Biased system with~$\theta=2$. We observe the metastable behavior of a typical downfall of a leader.}
\label{fig:testes}
\end{figure}
This conforms to a metastable behavior as leaders may remain at the top for quite some time until the possible rise of other explorer nodes leads to their descent.

To summarize, in this section, we illustrated that the attractors of the dynamical system may exhibit different macroscopic features based upon the initial distribution of~$\left(\mathbf{p}(n)\right)$, the response distribution, or other parameters. The emergence of few dominant nodes is possible, depending on the configuration of the parameters. Such parameters control how steep the power-law like histograms may be. As we observed in Subsection~\ref{subsec:rise}, the emergence of new leaders and downfall of old ones is well captured by the model.

\section{Concluding Remarks}\label{sec:concluding}


In this paper, we have proposed a family of stochastic dynamical systems model to study systems of interacting agents whose ties vary over time via reinforcement or fading, as observed in many social networks and EMONs of interest. The stochastic dynamical systems constructed bear a special structure in that they converge almost surely to a proper subset of the set of binary matrices (or in other words, directed graphs). As discussed, standard convergence methods for Markov processes or martingales are not applicable to our family of stochastic dynamical systems given by~\eqref{eq:stoc22} and we have developed new techniques that might be of independent interest. We have observed, for instance, that standard absorbing-like arguments for Markov processes (e.g., irreducibility plus ergodicity) do not apply to our setup as the process~$\left(\mathbf{p}(n)\right)$ never coalesces with extreme points~$\mathcal{E}\subset \left\{0,1\right\}^{N \times N}$ in finite time. We have established our convergence results by showing that: \textbf{i)} any~$E_{\ell}\in \mathcal{E}$ is a local attractor with positive probability \textbf{(}\textbf{Theorem}~\ref{co:bounde}\textbf{)}; \textbf{ii)} any arbitrarily small cover of~$\mathcal{E}$ is recurrent \textbf{(}\textbf{Theorem}~\ref{th:ball}\textbf{)}. These two assertions were then combined to prove strong-convergence via Theorem~\ref{th:markovhomog}.

In Section~\ref{sec:simula}, we have illustrated through numerical simulations a variant of behavior and characterization of the associated attractors of the family~\eqref{eq:stoc22}. In particular, we have observed that the shape of the histograms characterizing the out-flow degree of nodes in the limiting graphs depend on the initial conditions, number of nodes and response distribution. The conditions for the convergence of our family of stochastic dynamical systems are broad enough to be applicable to different phenomena, in particular, as we have illustrated in the Subsection~\ref{subsec:rise}, our model with appropriate choice of parameters can exhibit the emergence and downfall of leaders as often observed in social networks. This latter phenomenon empirically captured the fact that high status solely is not stable, even under biased response (based on the out-flow degree), when other nodes perform exploration. In other words, the preferential attachment effect due to the biased response mode is not enough to withstand the exploration factor in other (possibly \emph{poor}) nodes.

It is well known that, depending on the application of interest, networks with certain topological properties present more robust functional behavior. For instance, collaborative networks with positive~$E-I$ index -- amount of connections~$E$ external to an organization or community versus the internal connections~$I$ -- are more adaptable to cooperation and speed up recovery in face of crisis~\cite{stern}. Collaborative networks with hierarchical structures tend to be efficient under crisis as well~\cite{krackhardt}. As future work, we plan to investigate optimal control methods to induce the emergence of attractors with desirable topological properties, as described above.

\small
\bibliographystyle{IEEEtran}
\bibliography{IEEEabrv,biblio}

\end{document}